\documentclass{article} %
\usepackage[T1]{fontenc}
\usepackage[margin=1in]{geometry}
\usepackage{amsfonts,amsmath,amsthm,amssymb,mathtools}  %

\mathtoolsset{centercolon}
\usepackage{xfrac,nicefrac}
\usepackage{mathdots}
\usepackage{mleftright}  %
\let\left\mleft
\let\right\mright

\usepackage{xspace}
\xspaceaddexceptions{]\}}  %
\usepackage{regexpatch}

\usepackage{bm,bbm,dsfont}  %
\usepackage{caption}
\usepackage[normalem]{ulem}
\usepackage{enumitem}

\usepackage{graphicx}
\usepackage{float}
\usepackage{subcaption}  %
\usepackage{tcolorbox}
\usepackage{tikz}
\usetikzlibrary{decorations.pathreplacing}
\usetikzlibrary{calc}
\usetikzlibrary{positioning}
\usetikzlibrary{arrows.meta}

\usepackage[linesnumbered,boxed,ruled,vlined]{algorithm2e}

\SetCommentSty{mycommfont}

\usepackage{thmtools,thm-restate}
\usepackage[colorlinks,citecolor=blue,linkcolor=blue,urlcolor=red]{hyperref}

\theoremstyle{plain}

\newtheorem{theorem}{Theorem}[section]  %
\newtheorem{lemma}[theorem]{Lemma}

\newtheorem{claim}[theorem]{Claim}
\theoremstyle{definition}  %

\newtheorem{remark}[theorem]{Remark}

\usepackage[capitalise]{cleveref}
\crefname{algocf}{Algorithm}{Algorithms}
\Crefname{algocf}{Algorithm}{Algorithms}
\crefname{claim}{Claim}{Claims}
\Crefname{claim}{Claim}{Claims}

\DeclarePairedDelimiter{\bk}{(}{)}
\DeclarePairedDelimiter{\Bk}{[}{]}
\DeclarePairedDelimiter{\BK}{\{}{\}}

\DeclarePairedDelimiter{\abs}{\lvert}{\rvert}

\DeclareMathOperator*{\E}{\mathbb{E}}
\DeclareMathOperator*{\Var}{Var}

\let\Pr\PrAux
\DeclareMathOperator{\poly}{poly}

\DeclareMathOperator*{\ind}{\mathbbm{1}}

\renewcommand{\tilde}{\widetilde}

\newcommand{\defeq}{\coloneqq}
\newcommand{\eps}{\varepsilon}

\renewcommand{\epsilon}{\eps}

\usepackage{regexpatch}
\makeatletter
\xpatchcmd\thmt@restatable{%
\csname #2\@xa\endcsname\ifx\@nx#1\@nx\else[{#1}]\fi
}{%
\ifthmt@thisistheone
\csname #2\@xa\endcsname\ifx\@nx#1\@nx\else[{#1}]\fi
\else
\csname #2\@xa\endcsname[{Restated}]
\fi}{}{}
\makeatother

\newcommand{\hashto}{\xmapsto{h}}

\newcommand{\Index}{\texttt{index}}

\title{Optimal Non-Oblivious Open Addressing}
\author{
Michael A. Bender%
\thanks{Partially supported by NSF grants CCF 2247577 and  CCF 2106827 and the John L.~Hennessy Chaired Professorship. \texttt{bender@cs.stonybrook.edu}.}\\
Stony Brook University
\and
William Kuszmaul%
\thanks{Partially supported by a Harvard Rabin Postdoctoral Fellowship and by a Harvard FODSI fellowship under NSF grant DMS-2023528. \texttt{kuszmaul@cmu.edu}.}\\
CMU
\and
Renfei Zhou%
\thanks{Partially supported by a MongoDB PhD Fellowship. \texttt{renfeiz@andrew.cmu.edu}.}\\
CMU
}
\date{}

\begin{document}

\maketitle

\begin{abstract}
A hash table is said to be \emph{open-addressed} (or \emph{non-obliviously open-addressed}) if it stores elements (and free slots) in an array with no additional metadata. Intuitively, open-addressed hash tables must incur a space-time tradeoff: The higher the load factor at which the hash table operates, the longer insertions/deletions/queries should take.

In this paper, we show that no such tradeoff exists: It is possible to construct an open-addressed hash table that supports constant-time operations \emph{even when the hash table is entirely full}. In fact, it is even possible to construct a version of this data structure that: 

\begin{itemize}
\item is dynamically resized so that the number of slots in memory that it uses, at any given moment, is the same as the number of elements it contains;
\item supports $O(1)$-time operations, not just in expectation, but with high probability;
\item requires external access to just $O(1)$ hash functions that are each just $O(1)$-wise independent.
\end{itemize}

Our results complement a recent lower bound by Bender, Kuszmaul, and Zhou showing that \emph{oblivious} open-addressed hash tables must incur $\Omega(\log \log \epsilon^{-1})$-time operations. The hash tables in this paper are non-oblivious, which is why they are able to bypass the previous lower bound.
\end{abstract}

\thispagestyle{empty}
\setcounter{page}{0}
\newpage

\section{Introduction}
\label{sec:intro}

Starting with the introduction of linear-probing hash tables in 1954 \cite{Knuth98Vol3}, research on \emph{open-addressed hash tables} has persisted for more than seven decades as one of the most widely studied and fruitful subareas of randomized data structures. In its most general form, an open-addressed hash table is defined by its state: At any given moment, the hash table is simply an array storing elements and free slots in some order.\footnote{As we will discuss shortly, this class of data structures is sometimes also referred to as \emph{non-oblivious} open addressing \cite{fiat1988nonoblivious, fiat1993implicit} to distinguish from several subclasses (e.g., greedy open addressing \cite{yao1985uniform, ullman1972note} and oblivious open addressing \cite{munro1986techniques, oblivious}) that have also been referred to in past works simply as \emph{open addressing}.} If $k$ elements are stored in an array of size $n$, then the hash table is said to support \emph{load factor} $k / n$.

The main thrust of research on open-addressed hash tables has been towards answering the following basic question: Given the restriction that the hash table is simply an array of elements and free slots, and given a load-factor $(1 - \epsilon)$, what are the best possible time bounds that one can achieve for insertions/deletions/queries?

Early work on (ordered) linear-probing hash tables led to bounds of the form $O(\epsilon^{-2})$ for insertions/deletions and $O(\epsilon^{-1})$ for queries \cite{knuth1963notes, amble1974ordered, celis1985robin, konheim1966occupancy}; the query time was subsequently improved to $O(\log \log \epsilon^{-1})$ using interpolation techniques for searching within each run\footnote{A \emph{run} in an ordered linear-probing hash table is a maximal consecutive interval of non-empty slots.} in the table \cite{gonnetinterpolation}.  Bucketized cuckoo hashing \cite{dietzfelbinger2007balanced} can be used to achieve expected insertion/deletion time $O(2^{\poly \log \epsilon^{-1}})$ while placing items in one of two buckets of size $O(\log \epsilon^{-1})$ -- using interpolation techniques, this leads to $O(\log \log \log \epsilon^{-1})$-time queries. In fact, using techniques developed by Brent in the early 1970s \cite{brent1973reducing}, and then refined by later sets of authors \cite{gonnet1979efficient, mallach1977scatter}, it is known how to achieve positive queries in $O(1)$ expected time while supporting $\poly(\epsilon^{-1})$-expected-time insertions/deletions. Or, if one is willing to abandon insertions/deletions (i.e., they take $\poly n$ time), it is even known how to construct a static hash table with load factor $1$ that supports all queries in $O(1)$ worst-case time with high probability \cite{fiat1988nonoblivious, fiat1993implicit}. These are just a few samples from a rather large design space that has been explored over half a century.

All of the time bounds above encounter a seemingly natural barrier: At least \emph{one} of the operations (either queries, insertions, or deletions) takes $\Omega(\epsilon^{-1})$ time. Only in very recent work by Bender, Kuszmaul, and Zhou \cite{oblivious} has this barrier finally been overcome. Bender et al.~\cite{oblivious} achieve $O(1)$-expected-time queries and $O(\log \log \epsilon^{-1})$-time insertions/deletions. 

An interesting feature of the hash table in \cite{oblivious} is that queries are \emph{oblivious}. That is, the sequence of positions examined by the query algorithm for an element $u$ is given by a fixed \emph{probe sequence} $h_1(u), h_2(u), \ldots \in [n]$. There is also a matching lower bound for this class of hash tables: Any oblivious solution must incur $\Omega(\log \log \epsilon^{-1})$ amortized expected time for at least one of its operations \cite{oblivious}. This leads~\cite{oblivious} to conclude with the following open question: Is it possible to achieve better bounds using \emph{non-oblivious} solutions? Or, are the oblivious bounds optimal across all open-addressed hash tables?

\paragraph{Partner hashing: A constant-time solution with load factor 1.} This paper introduces \emph{partner hashing}, a non-oblivious open-addressed hash table with the following set of guarantees:
\begin{itemize}
\item \textbf{Load factor 1:} At any given moment, if there are $n$ elements, then the hash table is an array of size $n$ storing the elements in some order. There are no free slots and there is no external metadata. 
\item \textbf{Dynamic resizing:} The above guarantee remains true even as $n$ changes over time. The hash table can be viewed as occupying a prefix of memory that, on insertions, increases its size by one, and on deletions, decreases its size by one.
\item \textbf{Constant-time operations: }Each operation (insertions, deletions, and queries) takes constant time, not just in expectation, but with probability $1 - 1 / \poly(n)$, where $n$ is the current size of the hash table. 
\item \textbf{\boldmath$O(1)$-independent hash functions:} The hash table requires external access to just $O(1)$ hash functions that are each just $O(1)$-wise independent.
\end{itemize}

Our results come with a surprising takeaway: For non-oblivious open-addressed hash tables, there is no fundamental tension between $\epsilon$, the expected running times of operations, or even the high-probability worst-case running time of operations. Even the act of performing dynamic resizing, which might seem at first glance to require frequent rebuilds of the entire data structure, does not incur any inherent dependency on $\epsilon$.

At a technical level, the main contribution of partner hashing is a remarkably efficient approach to encoding meaningful metadata in the relative ordering of elements. When $n$ distinct elements are stored in an array of size $n$, there are $\log n! = \Theta(n \log n)$ bits of information that, in principle, could be encoded by the order of the elements. The problem is that, a priori, it would seem difficult to make use of these bits without compromising at least one type of operation (either queries or insertions/deletions). Past work on static non-oblivious open-addressed hash tables \cite{fiat1988nonoblivious, fiat1993implicit} showed that, if one is willing to rebuild the entire hash table on each operation, then it is possible to encode $\Theta(n \log n)$ bits of information in the data structure state, so that queries can retrieve words of size $\Theta(\log n)$ bits in $O(1)$ time -- in other words, the data structure can simulate a random-access memory (RAM) of $\Theta(n \log n)$ bits with words of size $O(\log n)$ bits and $O(1)$-time accesses. However, these constructions are tied intrinsically to the static case, as they require that the entire hash table forms a single, elaborate permutation, whose state may change entirely if a single RAM update were to be performed.

Partner hashing introduces a (completely) new technique for encoding a $\Theta(n \log n)$-bit RAM, again with $O(\log n)$-bit RAM words, and again where the entire RAM is encoded in the ordering of elements. The difference is that, now, \emph{both} updates and queries to the RAM can be supported in $O(1)$ time -- and all without interfering with the normal operations of the hash table, and even without requiring that the hash table has a static size. This allows us to make use of the natural redundancy that any open-addressed hash table comes with (by virtue of encoding a permutation of $n$ elements), without requiring the use of free slots at all. The fact that such a RAM can be implemented in an open-addressed hash table (and with high-probability time bounds, no less!) came as quite a surprise to the current authors, and was only discovered in the course of attempting to prove an impossibility result for the same problem.

\subsection{Other related work}

Since the introduction of hash tables in 1953 \cite{knuth1963notes}, the study of hash tables has grown into one of the largest sub-areas of data structures. Even very simple hash table designs have been the subject of a great deal of theoretical interest -- this includes work on linear probing \cite{knuth1963notes, amble1974ordered, konheim1966occupancy, pagh2009linear, patrascu2016independence, bender2022linear}, double hashing \cite{lueker1988more, molodowitch1990analysis,burkhard2005external,martini2003double,GuibasSz78, mitzenmacher2012peeling}, quadratic probing \cite{HopgoodDa72, Maurer68, ecker1974period, batagelj1975quadratic, radke1970use, kuszmaul2024towards}, uniform probing \cite{ullman1972note, yao1985uniform, greedy}, and cuckoo hashing \cite{pagh2001cuckoo, dietzfelbinger2007balanced, frieze2018balanced, bell20241, arbitman2009amortized, kirsch2010more, kirsch2007using, DietzfelbingerGoMi10, fountoulakis2016multiple, frieze2019insertion,frieze2018balanced,fountoulakis2013insertion, frieze2011analysis, khosla2019faster, fountoulakis2010orientability, frieze2012maximum,walzer2022insertion, cain2007random, fernholz2007k, lelarge2012new,lehman20093, walzer2023load,kuszmaul2025efficient}. See, also \cite{munro1986techniques}, for an early survey on the topic. Many of these hash-table designs are still the subject of widely-studied open questions, including the questions of determining the space-time tradeoffs that are achieved by quadratic probing \cite{kuszmaul2024towards} and variations of cuckoo hashing \cite{frieze2018balanced, dietzfelbinger2007balanced}.

Within open-addressed hash tables, one of the major pushes has been to understand the optimal time/space bounds that a hash table can hope to achieve. Past works have determined the optimal tradeoffs for sub-classes of open addressing, including oblivious open addressing \cite{oblivious}, greedy open addressing \cite{ullman1972note, yao1985uniform, greedy}, incremental open addressing without reordering \cite{greedy}, history-independent open addressing \cite{DBLP:journals/iacr/NaorT01, DBLP:conf/focs/BlellochG07, kuszmaul2023strongly}, and static open addressing \cite{fiat1988nonoblivious,fiat1993implicit, brent1973reducing, gonnet1979efficient, mallach1977scatter}. The largest class of open-addressed hash tables is the one that we study in this paper, non-oblivious open addressing. Our results reveal that, somewhat surprisingly, this class does not exhibit \emph{any} non-trivial tradeoff between time and space.

There has also been a great deal of work outside of the open-addressing model. This includes long lines of work on succinct hash tables \cite{bender2022optimal, li2023dynamic, li2024dynamic, li2023tight, bender2023iceberg, bercea2023dynamic, Raman03Succinct, ArbitmanNaSe10}, external-memory hash tables \cite{IaconoPa12, ConwayFaSh18, JensenPa08, verbin2013limits}, and hash tables with super-high probability or even deterministic guarantees \cite{GoodrichHiMi12, kuszmaul2022hash, bender2023iceberg, Sundar91,HagerupMiPa01, Ruzic08,PatrascuTh14}. Succinct hash tables, in particular, are of interest, because by storing non-explicit representations of keys, they are able to use space close to the information-theoretic optimum. For an overview of some of the major techniques used in succinct hashing, see, also \cite{bender2024modern}.

Closely related to the theory of hash tables is the theory of how to simulate random \emph{hash functions} time and space efficiently. Starting with Carter and Wegman \cite{wegman1979new, carter1977universal}, there has been a long line of work \cite{siegel1989universal, siegel1995universal, siegel2004universal, ostlin2003uniform, pagh2001cuckoo, dietzfelbinger2009applications} on how to simulate high-levels of independence without blowing up time or space. Perhaps the most notable of these results is the construction by Siegel \cite{siegel1995universal} that supports $n^{\epsilon}$-independence with $n^{O(\epsilon)}$ space and constant evaluation time. There is also interest in how much independence is required for specific hash tables, such as linear probing \cite{pagh2009linear, patrascu2016independence} (for which 5-independence is known to be optimal) and cuckoo hashing \cite{cohen2009bounds} (the optimal independence for which is still open).

Open-addressed hash tables are special cases of \emph{implicit data structures}, whose defining characteristic is that the entire data structure is simply an array of elements, and the structural information is encoded implicitly in the ordering of elements, rather than explicitly using pointers. There has been a series of works studying implicit \emph{ordered} dictionaries \cite{munro1980implicit,frederickson1983implicit,munro1984implicita,munro1986implicit,franceschini2002implicita,franceschini2003implicit,franceschini2003optimal,franceschini2006optimal}, culminating in a solution due to Franceschini and  Grossi \cite{franceschini2006optimal} that achieves $O(\log n)$ amortized time for a variety of operations including insertions, deletions, and predecessor searches. For implicit \emph{unordered} dictionaries, Yao \cite{yao1981should} shows that deterministic data structures over a sufficiently large universe must have $\Omega(\log n)$ query time even in the static setting. In contrast, when the universe size is subexponential in $\poly n$, Fiat and Naor \cite{fiat1993implicit} show that static solutions can acheive deterministic query time $O(1)$. Additionally, for arbitrary universe sizes $m$, it is known how to construct randomized static solutions \cite{fiat1988nonoblivious, fiat1993implicit} that achieve $O(1)$ worst-case query time with high probability, and deterministic solutions \cite{fiat1988nonoblivious, fiat1993implicit} that achieve constant time with access to $O(\log \log m)$ bits of external storage.

Several techniques and results from past works will be directly useful in our constructions. To encode metadata, we make use of the space-efficient retreival data structures developed by Demaine, Meyer auf der Heide, Pagh, and P{\v a}tra{\c s}cu~\cite{demaine2006dictionariis} (since we don't get to store metadata explicitly, this data structure will be stored implicitly in the ordering of elements). To turn expected insertion/deletion time guarantees into high-probability guarantees we will use (along with some additional ideas) the idea of buffering work so that it can be spread out over multiple operations, a technique that, in the context of hash tables, was first developed first by Dietzfelbinger and Meyer auf der Heide~\cite{dietzfelbinger1990new}, and that has since been applied to a variety of other constant-time hash tables~\cite{ArbitmanNaSe10, arbitman2009amortized,bender2024modern}. We will also draw on many of the basic design paradigms that are used within succinct hashing, including the idea hashing keys to random buckets of size $\poly \log n$ \cite{bender2022optimal, li2024dynamic, bender2023iceberg, bercea2023dynamic, ArbitmanNaSe10}, and the idea of mapping each key to its position within the bucket using a small (and in our case implicit) secondary data structure \cite{bender2022optimal, li2024dynamic, bender2023iceberg, bercea2023dynamic, ArbitmanNaSe10}.

\section{Preliminaries}
\label{sec:prelim}

\paragraph{Open addressing. }
An \emph{unordered dictionary} is a data structure that stores a set of keys from some universe $U$, while supporting insertions (which add a new key), deletions (which remove a key), and membership queries (which check if a given key is present).\footnote{In most such dictionaries -- including all open-addressed hash tables, which are the topic of this paper -- the dictionary can also be augmented to store a value associated with each key.} As a cultural convention, an unordered dictionary is referred to as a \emph{hash table} if it uses hash functions in its design.

A hash table is said to be \emph{open-addressed} if 
its entire state is simply an array $A[0 \,..\, n\!-\!1]$ of elements and free slots, with each element in the hash table appearing exactly once in the array. Besides this array, the hash table also gets access to three other pieces of information that are not considered to be part of the state: the array size $n$, the number of elements that are currently present, and the random bits (or, equivalently, hash functions) that the hash table requires access to. Although in our warmup construction (Section \ref{sec:simple}) we will assume access to fully random hash functions (or, equivalently, to $\Theta(U \log U)$ random bits), our final constructions (Sections \ref{sec:fixed_size} and \ref{sec:resizing}) will require access only to a constant number of $O(1)$-wise independent hash functions (i.e., to $O(\log U)$ random bits).

It should be emphasized that the array $A$ cannot be used to store arbitrary content. Rather, if the array has $n$ slots, and if the hash table is storing $k$ elements, then each of the elements must appear in exactly one slot, and the remaining $n - k$ slots must be empty. In the final constructions of this paper, we will have $n = k$, which means that the array $A$ is simply a \emph{permutation of the $n$ elements being stored}. The hash table does not get any additional space to store metadata. 

An open-addressed hash table is said to be \emph{dynamically resized} if, as insertions and deletions are performed, the hash table changes the size $n$ of the array that it uses over time. In this case, we can think of the array $A$ as representing, at any given moment, a size-$n$ prefix of an infinite tape. If a hash table is not dynamically-resized, it is said to be \emph{fixed-capacity}.

When discussing space efficiency, one typically defines the \emph{load factor} of an open-addressed hash table to be the quantity $k / n$, where $k$ is the number of elements currently stored. In our final construction (Section \ref{sec:resizing}), we will build a dynamically-resized hash table that operates continually at load factor 1. 

\paragraph{Supporting high-probability guarantees.}
A data structure is said to offer a guarantee $G$ \emph{with high probability in $n$} if, for every positive constant $c > 0$, if we assume that the constants used in the description of the data structure are chosen appropriately (possibly as a function of $c$), the event $G$ holds with probability at least $1 - O(1 / n^c)$. In this case, we also say that the event $G$ \emph{fails} with probability $1 / \poly(n)$.

Throughout the paper, if a hash table detects that a $1 / \poly(n)$-probability failure event has occurred, we will allow the hash table to sample new hash functions with which to rebuild itself. 

\paragraph{Reducing to polynomial universe size.} When constructing a fixed-capacity open-addressed hash table, one can assume without loss of generality that the universe $U$ has size at most $|U| \le \poly(n)$. In particular, one can use a pairwise independent hash function $g:U \rightarrow U'$ to map keys into a polynomial-size universe $U'$. The hash table can then act as though it is storing elements from $U'$ (i.e., it treats the array slot containing $x$ as containing $g(x)$). If, by unfortunate luck, a key $x$ being inserted has a collision $g(x) = g(y)$ for some key $y$ already present, then the hash table incurs a failure and is rebuilt. By setting $|U'| = \poly(n)$, these types of failures occur with probability at most $1 / \poly(n)$. 

This reduction does not immediately apply to dynamically-resized hash tables because $n$ (and therefore $|U'|$) changes over time. However, as we shall see in Section \ref{sec:resizing}, because of the way that our dynamically-resized solution builds on our fixed-size solution, it will be possible to maintain the reduction to a polynomial-size universe case even when we are performing resizing. (See Remark \ref{rem:universe}.)

\paragraph{Other conventions.}
We will use the notation $[a] = \BK{0, 1, \ldots, a\!-\!1}$, $[a, b] = \{a, a + 1, \ldots, b\}$, $[a, b) = \{a, a + 1, \ldots, b - 1\}$, etc. Additionally, by a minor abuse of notation, when discussing a positive constant $\delta$, we will sometimes use the notation $\delta \cdot O(f(n))$ to denote a function of the form $\delta \cdot g(n)$, where $g(n) = O(f(n))$, and where the hidden constant in the big-O notation is independent of $\delta$.

\section{\boldmath{}A Warmup Data Structure: Load Factor \texorpdfstring{$1 - o(1)$}{1 - o(1)} with Expected Operation Time \texorpdfstring{$O(1)$}{O(1)}}
\label{sec:simple}
As a warmup data structure, to demonstrate the basic techniques that we will use in the paper, we begin by proving the following theorem:

\begin{restatable}{theorem}{ThmSimple}
  \label{thm:simple}
  Assuming access to a constant number of fully random hash functions, there is an open-addressing hash table that maintains $n \bk[\big]{1 - \frac{1}{\log^c n}} \pm O(1)$ keys in $n$ slots (for any $c > 100$), supports insertions and deletions in $O(1)$ expected time, and supports queries in $O(1)$ worst-case time. The hash table requires $O(n \log n)$ bits of scratch space to initialize.
\end{restatable}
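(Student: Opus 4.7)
The plan is a two-level scheme. At the outer level, partition the array of $n$ slots into $n/B$ buckets of size $B = \Theta(\log^{c'} n)$ for a large constant $c' > c$, and use a fully random hash function $h_0$ to map each key to one bucket. Because $h_0$ is fully random, a standard Chernoff bound shows that with probability $1 - 1/\poly(n)$, every bucket receives at most $B(1 - 1/\log^{c} n) + O(\sqrt{B \log n}) \le B$ keys, so no bucket overflows. Within each bucket, I will implicitly encode a retrieval data structure (in the style of Demaine, Meyer auf der Heide, Pagh, and \Patrascu) into the ordering of the keys in the bucket's $B$ slots. The retrieval structure maps each key to its actual position within the bucket; a query then computes $h_0(x)$ to locate the bucket, consults the implicit retrieval structure in $O(1)$ time to obtain a candidate slot, and verifies that the slot contains $x$. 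This yields $O(1)$ worst-case query time.

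For insertions and deletions, I will structure the encoding so that a typical update rearranges only a constant number of slots within the affected bucket; a full rebuild of the bucket's implicit encoding (costing $O(B) = O(\poly \log n)$ time) is only triggered on a $1/\poly \log n$ fraction of operations, so the expected update time remains $O(1)$. The $\pm O(1)$ tolerance on the number of stored keys gives the small amount of slack needed to absorb the tail of rebuild events without violating the load-factor invariant. The $O(n \log n)$ bits of scratch space are used during initialization to compute each bucket's retrieval encoding via the standard linear-system construction of Demaine et al.

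The main obstacle, and the reason for the $1/\log^c n$ load-factor slack, is showing that the retrieval encoding fits inside the implicit bit budget. A bucket of size $B$ holding $k \le B$ keys provides $\log(B!/(B-k)!) \approx k \log B - O(k)$ bits of implicit space through the ordering of its contents, whereas a generic retrieval structure for $k$ keys into $[B]$ requires $(1+\eps) k \log B$ bits---a small but nonzero gap. To close the gap, I will introduce a secondary hash function $h_1$ so that most keys land directly at slot $h_1(x)$ and do not need to be tracked by the retrieval structure; only displaced keys (from collisions at $h_1$) have their actual positions explicitly encoded. By choosing $c'$ sufficiently large compared to $c$, the expected number of displaced keys per bucket becomes small enough that the retrieval structure comfortably fits in the available bits. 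Verifying this quantitative comparison carefully, together with bounding the tail behavior of bucket overloads and collision counts so that the data structure fails with probability at most $1/\poly(n)$ (at which point fresh hash functions are resampled), is the key technical work of the proof.
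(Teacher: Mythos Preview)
Your proposal has a genuine gap at the step where you ``implicitly encode a retrieval data structure into the ordering of the keys in the bucket's $B$ slots'' and then ``consult the implicit retrieval structure in $O(1)$ time.'' You never specify a mechanism by which a word of this implicit encoding can be \emph{read} in $O(1)$ time without already knowing where the keys sit. Pairwise comparisons of adjacent slots yield only $O(1)$ bits per probe; extracting $\Theta(\log n)$ bits (one RAM word) from a local permutation of $B = \poly\log n$ elements would require reading $\omega(1)$ slots. This is precisely the obstacle the paper's construction is designed to overcome: by swapping a key in an \emph{index slot} with a key in a globally chosen \emph{partner bin}, a single slot read followed by one evaluation of $h$ recovers $\Theta(\log m) = \Theta(\log n)$ bits. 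That cross-bin ``partner'' trick, together with the observation that RAM \emph{reads} do not themselves require the retrieval structure (breaking the apparent circularity), is the core idea of the paper and is absent from your plan.

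Your secondary-hash idea also does not rescue the bit budget. At load factor $1 - 1/\log^c n$ inside a bucket, a random $h_1: U \to [B]$ leaves a $\Theta(1)$ fraction of keys displaced (roughly $1/e$ of them), not a vanishing fraction; so the retrieval structure must still cover $\Theta(B)$ keys with $\Theta(\log B)$-bit values each, which is essentially the entire implicit budget. Moreover, once most keys are pinned to $h_1(x)$, the permutation freedom you were counting on for the encoding is largely consumed. Finally, the claim that updates are $O(1)$ except for a $1/\poly\log n$ fraction of full-bucket rebuilds is not supported: updating a retrieval structure and re-encoding it into a permutation does not obviously decompose this way. The paper instead gets $O(1)$ expected update time because each RAM write touches $O(1)$ slots (one swap to erase, one swap to write), and finding a self-loop in a partner bin succeeds in $O(1)$ expected samples thanks to the random shifts $r_i$.
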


Already, the data structure in Theorem \ref{thm:simple} achieves an interesting guarantee: it shows that one can operate continually at load factor $1 - o(1)$ without compromising expected insertion/deletion/query time. 

It is worth noting, however, that this data structure is still far from achieving our final guarantees. In particular, the weaknesses of this data structure are that: it requires fully random hash functions, rather than $O(1)$-independent ones; it does not support dynamic resizing, instead requiring that the number of elements stays in a very narrow band of the form $n \bk[\big]{1 - \frac{1}{\log^c n}} \pm O(1)$; it requires scratch space to initialize; and it allows insertions/deletions to take constant expected time, rather than high-probability worst-case time. With additional technical ideas, we will be able to resolve all of these issues in later sections.

To simplify our discussion, throughout the section, we will assume that the universe of keys $U$ has $\poly(n)$ size. This assumption is without loss of generality, as noted in \cref{sec:prelim}.

The main challenge in proving Theorem \ref{thm:simple} is to use the relative order of elements in the hash table in order to maintain an implicit RAM of $\Theta(n \log n)$ bits with word-size $w = \Theta(\log n)$. This is the task that we begin with in \cref{sec:encode_ram}. 

\subsection{Encoding the RAM}
\label{sec:encode_ram}

Let $B = \log^{\Theta(1)} n$ be a parameter. The slots are divided into $m$ bins each of $B$ slots, where $mB = n$ is the size of the hash table, and where $m$ is a power of two. The hash table will accommodate $n \bk[\big]{1 - B^{-1/4}} \pm O(1)$ keys, so the load factor is $1 - 1/\poly\log n$ as desired.
We label the bins with numbers in $[m]$.

Slots are labeled with numbers in $[n]$. A slot is identified by two numbers: the bin $k \in [m]$ it belongs to, and its \emph{offset} $t \in [B]$ within that bin. The slot in bin $k$ with offset $t$ is slot $kB + t$ in the hash table.

Let $h$ be a fully-independent hash function from $U$ to $[m]$, i.e., it hashes every key $x$ to a bin $k$, written $x \hashto k$ or $h(x) = k$. Also let $r_0, \ldots, r_{n/4 - 1}$ be random numbers in $[m/2]$, which are obtained via another fully-independent hash function $g$ from $[n/4]$ to $[m/2]$ (that is, $r_i$ is a shorthand for $g(i)$, where $g$ is the hash function). 

In our design, the arrangement of the keys has two \emph{layouts}: the \emph{logical layout} and the \emph{physical layout}. Each layout maps the keys to slots. The physical layout is the true layout, i.e., it indicates the actual slot that each key resides in. However, before we can describe the physical layout, it will be helpful to first describe the logical layout, which is what the layout would be if we were not storing any information in the RAM that we are implementing. As we shall see, the only difference between the physical and logical layouts will be that certain pairs of keys are swapped with one another. 

To make clear which layout we are talking about at any given moment, we will refer to the \emph{logical (resp.~physical) bin} of a key $x$ to be the bin where it resides in the logical (resp.~physical) layout; the \emph{logical (resp.~physical) offset} of a key $x$ to be the offset of the key in the logical (resp.~physical) layout; and, finally, the \emph{logical (resp.~physical) address} of a key $x$ to be the slot number in which the key resides in the logical (resp.~physical) layout.

\paragraph{The logical layout.}
In the logical layout, we put every key $x$ into the bin $h(x)$ that it hashes to, i.e., the logical bin of each key $x$ is simply $h(x)$. Within a given bin, we will guarantee, at any given moment, that the occupied slots form a prefix of the offsets in the bin, and that the unoccupied slots form a suffix of the offsets in the bin. (The set of occupied slots will be the same in both the logical and physical layouts.) 

The hash table will fail if any of the bins overflow, i.e., if more than $B$ keys hash to the bin.

\begin{claim}
  \label{claim:no_overflow_underflow}
  As long as $B \ge \log^{4} n$, with probability $1 - n^{-\omega(1)}$, the number of keys hashed to every bin is in $\Bk[\big]{B (1 - 2B^{-1/4}),\, B}$. This implies that, with high probability, no bin overflows.
\end{claim}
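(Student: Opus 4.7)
The plan is a textbook balls-in-bins Chernoff concentration argument applied to each bin, followed by a union bound over all $m = n/B$ bins. Because $h$ is fully independent, for a fixed bin $k$ the load $X_k$ (the number of keys that hash to $k$) is distributed as the sum of $|S|$ independent $\mathrm{Bernoulli}(1/m)$ variables, where $S$ is the current key set. By the hash table's invariant, $|S| = n\bk[\big]{1 - B^{-1/4}} \pm O(1)$, so
\[
  \mu \defeq \E[X_k] \;=\; B\bk[\big]{1 - B^{-1/4}} \pm O(1/m),
\]
which for the parameter regime $B = \log^{\Theta(1)} n$ is essentially $B(1 - B^{-1/4})$.

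Next I would apply the multiplicative Chernoff bound in both directions. The overflow event $X_k > B$ is $X_k \ge (1 + \delta)\mu$ with $\delta = B^{-1/4}/(1 - B^{-1/4}) = \Theta(B^{-1/4})$, so Chernoff yields
\[
  \Pr[X_k > B] \;\le\; \exp(-\Omega(\delta^2 \mu)) \;=\; \exp(-\Omega(B^{1/2})).
\]
The symmetric underflow event $X_k < B(1 - 2 B^{-1/4})$ is $X_k \le (1 - \delta)\mu$ with the same $\delta$, and hence obeys the same bound. Combining, each individual bin lies outside $[B(1 - 2B^{-1/4}),\, B]$ with probability at most $\exp(-\Omega(B^{1/2}))$.

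A union bound over the $m \le n$ bins then gives total failure probability at most $n \exp(-\Omega(B^{1/2}))$. Under the hypothesis $B \ge \log^4 n$ we have $B^{1/2} \ge \log^2 n = \omega(\log n)$, so this is $n^{-\omega(1)}$, as desired. The ``no overflow'' corollary is immediate, since $X_k \le B$ exactly says that bin $k$ does not overflow.

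The only delicate point---and really the whole reason behind the choice $B \ge \log^4 n$---is making the Chernoff exponent $B^{1/2}$ large enough to absorb both the $\log n$ factor from the union bound and to push the bound into the super-polynomial regime $n^{-\omega(1)}$. Beyond this parameter matching there is no technical obstacle; the calculation is otherwise entirely routine and uses none of the open-addressing structure beyond full independence of $h$.
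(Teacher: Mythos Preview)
Your proposal is correct and essentially identical to the paper's own proof: both set up the bin load as a sum of independent Bernoulli indicators with mean $B(1-B^{-1/4})$, apply a Chernoff bound to get deviation probability $\exp(-\Omega(B^{1/2}))$, and then union-bound over the $m$ bins using $B^{1/2} \ge \log^2 n$. The paper phrases the deviation additively as $|X - \mu| > B^{3/4}$ rather than multiplicatively, but this is the same computation.
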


\begin{proof}
  By symmetry, we only need to bound the probability that bin 0 overflows or underflows. Let $X_i \defeq \ind\Bk[\big]{\text{the $i$-th key $\hashto$ bin 0}}$, and let $X = \sum_{i=1}^{n(1 - B^{-1/4})} X_i$. According to the overall load factor, we know $\E[X] = B (1 - B^{-1/4})$. Applying a Chernoff bound, we know that
  \[
    \Pr\Bk[\big]{\abs[\big]{X - B (1 - B^{-1/4})} > B^{3/4}} \le \exp\bk[\big]{-\Omega(B^{1/2})} \le \exp\bk[\big]{-\Omega(\log^2 n)} = n^{-\omega(1)}.
  \]
  Applying a union bound over all $m$ bins concludes the proof.
\end{proof}

For the moment, we will assume access to a \emph{retrieval data structure} that maps every stored key $x$ to its logical offset within bin $h(x)$. Later on, we will show how to encode the retrieval data structure \emph{within} the RAM that we are implementing. But, to avoid the risk of cyclic dependencies, it is helpful for now to think of the retrieval data structure as being stored externally. Specifically, we will use the following retrieval data structure due to Demaine, Meyer auf der Heide, Pagh, and P{\v a}tra{\c s}cu \cite{demaine2006dictionariis} (restated for a $\poly(n)$-sized universe):

\begin{theorem}[\cite{demaine2006dictionariis}]
  \label{thm:retrieval}
  There is a dynamic retrieval data structure that maintains $n$ key-value pairs, where the keys come from a $\poly(n)$-sized universe $U$ and the values are $r$-bit strings; that supports retrieval queries in $O(1)$ worst-case time; that supports insertions and deletions in $O(1)$ time with high probability in $n$; and that uses $O(n \log \log n + nr)$ bits of space.
\end{theorem}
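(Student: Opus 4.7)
The plan is to use a two-level bucketing scheme on top of a linear-algebraic local retrieval primitive. First, pick a hash function $h: U \to [m]$ with $m = \Theta(n / \log n)$, partitioning the stored keys into buckets of expected size $\Theta(\log n)$. A standard Chernoff bound (using, e.g., $\Theta(\log n)$-wise independence via Siegel's construction) shows that, with high probability in $n$, every bucket has size at most $s_{\max} = O(\log n)$. Each bucket maintains its own local retrieval data structure, and the global structure is just the concatenation of these local structures plus a small directory recording where each bucket's encoding lives.

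For the local retrieval inside a bucket $B$ of size $s$, the plan is to use the classical sparse $\F_2$-linear approach: pick $k = 3$ local hash functions $h_1, \ldots, h_k$ mapping each key $x \in B$ to positions in a table $T_B$ of size $(1+\eps) s$, where each cell stores an $r$-bit string. The contents of $T_B$ are chosen so that $T_B[h_1(x)] \oplus \cdots \oplus T_B[h_k(x)]$ equals the desired value of $x$ for every $x \in B$. By known random-hypergraph peelability results, this system is solvable with probability $1 - O(1/s)$, and solving it takes $O(s)$ time. In the rare $O(1/s)$-probability case that no valid assignment exists, resample the local hash functions; the expected number of resamples is $O(1)$. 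The space per bucket is $O(s r)$ bits for $T_B$ plus $O(\log \log n)$ bits for the (seeds of the) local hash functions and the current bucket size. Summed over all $m = \Theta(n / \log n)$ buckets, this is $O(nr + (n/\log n) \cdot \log n \cdot \log \log n / \log n) = O(nr + n \log \log n)$ bits, as required; a bucket directory of $O(n \log \log n / \log n)$ bits records starting positions.

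Queries are trivially $O(1)$ worst-case: compute $h(x)$, locate the bucket, read the $k$ cells, and XOR. For updates, the naive strategy rebuilds the affected bucket in $O(\log n)$ time, giving $O(1)$ \emph{amortized} time. To lift this to $O(1)$ with high probability, the plan is to deamortize via the standard incremental-rebuild trick already used elsewhere in the paper: whenever a bucket needs to be rebuilt, stage a shadow copy and spend $O(1)$ rebuild-work units per update hitting that bucket until the shadow is complete, meanwhile routing queries to both the old and new copies. Concurrently, a small $O(\log^2 n)$-slot global buffer absorbs keys belonging to any bucket whose rebuild is in progress; because buckets are chosen uniformly, any fixed bucket is unlikely to be touched more than $O(\log n / \log \log n)$ times in a window of $O(\log n)$ operations, giving enough slack for the rebuild to finish in time with probability $1 - n^{-\omega(1)}$ after a union bound.

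The main obstacle is the high-probability time bound, not the space or the correctness of the linear system. Specifically, one must verify that (i) the per-bucket rebuild of size $O(\log n)$ can be scheduled so that it completes before another rebuild of the same bucket is triggered, and (ii) the global buffer, which must itself support $O(1)$-time retrieval, never exceeds its capacity. Both properties follow from carefully tuned concentration arguments once the bucket size and buffer size are set to appropriate $\poly \log n$ values, but this is the step that requires the most delicate analysis; the underlying $\F_2$ retrieval and the space accounting are comparatively mechanical.
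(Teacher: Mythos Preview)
This theorem is not proved in the paper at all; it is quoted verbatim as a black-box result from~\cite{demaine2006dictionariis}, so there is no in-paper argument to compare your proposal against.

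As a standalone sketch, your outline follows a plausible template (hash to $\Theta(\log n)$-sized buckets, solve retrieval locally via an $\F_2$-linear/XOR system, deamortize rebuilds), but several steps do not hold up as written. First, the $O(\log\log n)$ bits you budget per bucket for local hash-function seeds is too little: peelability of the random $3$-uniform hypergraph requires essentially fully random local hashes, and with keys drawn from a $\poly(n)$-sized universe each seed already needs $\Omega(\log n)$ bits unless you invoke a shared-randomness construction (e.g., global tabulation tables) that you never mention. Second, your space arithmetic is garbled---the displayed expression $(n/\log n)\cdot\log n\cdot\log\log n/\log n$ does not correspond to the quantities you just described, and a variable-size bucket directory as you propose would naively cost $\Theta((n/\log n)\log n)=\Theta(n)$ bits, not $O(n\log\log n/\log n)$. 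Third, and most seriously, the deamortization is only a gesture: the claim that a fixed bucket is touched $O(\log n/\log\log n)$ times in a window of $O(\log n)$ operations relies on updates being spread uniformly over buckets, which holds only because $h$ is random and the adversary is oblivious---this needs to be stated and used explicitly---and you also must specify how a query is answered correctly when the relevant key may reside in the stale bucket copy, the partially-built shadow, or the global buffer simultaneously. The overall strategy can be made to work with enough care, but the proposal as written has genuine gaps rather than merely omitted routine details.
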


In summary, using the hash function $h$ and the retrieval data structure, we can recover the logical address of a key $x$ as follows: the logical bin of the key is given by $h(x)$, and the logical offset of the key is given by the retrieval data structure. If the key has logical bin $k$ and logical offset $t$, then its logical address is $kB + t$. We remark that, in our setting, because the retrieval data structure is storing $r = O(\log \log n)$ bit values (i.e., logical offsets in a bin of size $B \le \poly\log n$), so the total space needed for a dynamic retrieval data structure is $O(n \log \log n + nr) = O(n \log \log n)$ bits.

\paragraph{The physical layout.}
As noted earlier, the relationship between the logical and physical layouts is that, starting at the logical layout, we can get to the physical layout by swapping some pairs of keys with one another. Recall that every key $x$ is hashed to a logical bin via the hash function $h$. We now view this hash value $h(x)$ as a pointer from the key $x$ to that bin. If, in the physical representation, a key $x$ resides in its logical address, then key $x$ will point at the bin containing itself, in which case we say key $x$ is a \emph{self-loop}.

Assume keys $x$ and $y$ are two self-loops in different bins. We can choose to swap $x$ and $y$, making them a \emph{coupling pair}, in which case we say $x$ and $y$ are \emph{partners} of each other. As a special case, if $x$ is a self-loop, we say that $x$ is the partner of itself. Whenever two keys $x$ and $y$ are partners, the physical address/bin/offset of $x$ will be the logical address/bin/offset of $y$ and vice versa. 

Given access to the physical layout, the hash function $h$, and the retrieval data structure introduced above, we can recover the physical address of any key $x$ as follows. First, we compute the logical address $j$ of $x$ using $h$ and the retrieval data structure. If $x$ is in slot $j$, then $j$ is also the physical address for $x$. Otherwise, $x$ must be in a coupling pair with the key $y$ physically sitting in slot $j$. Once again using $h$ and the retrieval data structure, we can calculate the logical address $i$ of $y$. This address $i$ is guaranteed to be the physical address of $x$. 

More generally, assuming that $x = A[i]$ and $y = A[j]$ form a coupling pair, as long as we know any of the four objects $x, y, i, j$, we can determine the other three in $O(1)$ time (with high probability in $n$) by accessing the retrieval data structure, the hash function $h$, and the physical layout. See Figure \ref{fig:xyij}.

For future reference, we describe the procedure for finding the partner of a given key $x$ in \cref{alg:find_partner}.

\begin{algorithm}[ht]
  \caption{Finding the Partner of a Key}{\label{alg:find_partner}}
  \DontPrintSemicolon
  \SetKwProg{Fn}{Function}{:}{}
  \SetKwFunction{FindPartner}{FindPartner}
  \SetKwFunction{QueryRetrieval}{QueryRetrieval}
  \Fn(\tcp*[f]{Find the physical address of the partner of key $x$}){\FindPartner{$x$}} {
    $t \gets$ \QueryRetrieval{$x$} \tcp*{If $x$ not in the retrieval data structure, $t \in [B]$ can be arbitrary} 
    \Return $h(x) \cdot B + t$\;
  }
\end{algorithm}

\begin{figure}[ht]
  \centering
  \includegraphics[width = 0.5 \textwidth]{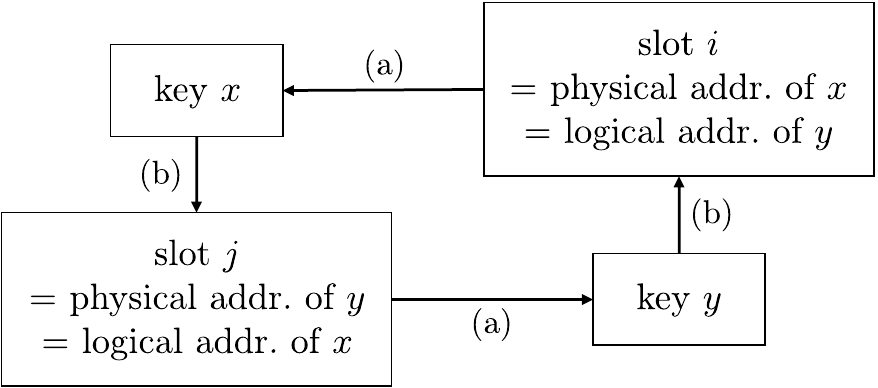}
  \caption{Using any one of $x, y, i, j$ to recover the others. (a) Given $i$ or $j$, we can recover $x$ and $y$, respectively, using the physical layout; (b) given $x$ or $y$, we can recover $j$ and $i$, respectively, using $h$ and the retrieval data structure (in particular $j$ and $i$ are the logical addresses for $x$ and $y$, respectively). Thus $i$ determines $x$, which determines $j$, which determines $y$, which determines $i$.}
  \label{fig:xyij}
\end{figure}

\begin{figure}[ht]
  \centering
  \includegraphics[width = 0.8 \textwidth]{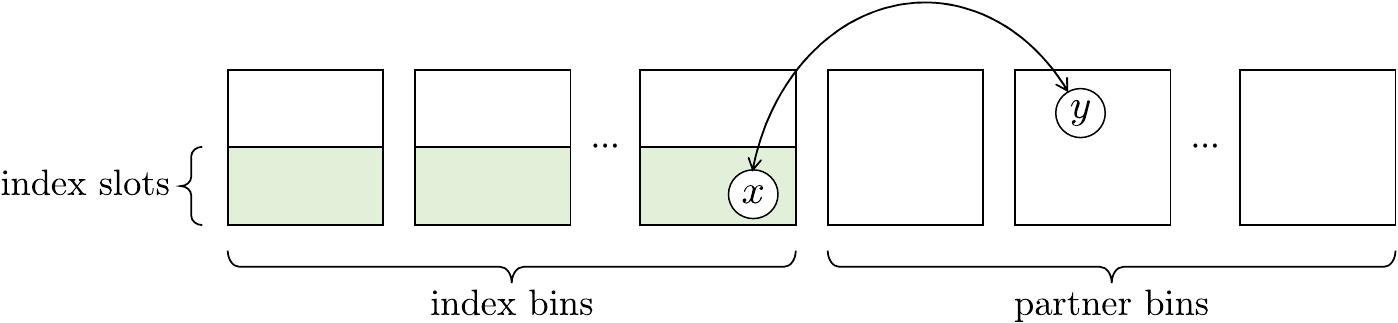}
  \caption{Index bins and partner bins. There are $m/2$ index and partner bins respectively, each containing $B = \poly \log n$ slots. To encode a value $v_i$ in the $i$-th word, we swap the key $x$ stored in slot $\Index(i)$ with an arbitrary self-loop $y$ in the $\bk[\big]{(v_i \oplus r_i) + 1}$-th partner bin.}
  \label{fig:index_and_partner_bins}
\end{figure}

\paragraph{Implementing RAM reads and writes.}
The goal of the physical layout is to encode a RAM with $\Theta(n \log n)$ bits, which will be used to store metadata for the full hash table (including the retrieval data structure of $\Theta(n \log \log n)$ bits). We divide the $m$ bins into two types: the first half of the bins are called \emph{index bins} while the second half are called \emph{partner bins}. For $i \in [m/2]$, the $(i + 1)$-th index bin has bin number $i$, while the $(i + 1)$-th partner bin has bin number $m/2 + i$. 

In every index bin, the first half of the slots (with offsets $t < B/2$) are called \emph{index slots}. This means that there are $mB/4 = n/4$ index slots in total. For $i \in [n / 4]$, we use $\Index(i) \in [n]$ to denote the position of the $(i + 1)$-th index slot in the hash table, so $A[\Index(i)]$ is the key stored in the $i$-th index slot. Note that $A[\Index(i)]$ will never be an empty slot, since by \cref{claim:no_overflow_underflow}, and because the occupied slots within each bin form a prefix of the offsets, all of the index slots are guaranteed to be occupied at any given moment.

Suppose that we wish to store values $v_0, v_1, \ldots, v_{m/4 - 1}$ in the RAM that we are implementing. We encode $v_i$ using index slot $i$. Specifically, if $x$ is the key with logical address $\Index(i)$, then we partner $x$ with some key $y$ whose logical bin is the $((v_i \oplus r_i) + 1)$-th partner bin (i.e., bin $m/2 + (v_i \oplus r_i)$). This means that $y$ has physical address $\Index(i)$. Thus, to recover $v_i$, we can compute $y = A[\Index(i)]$, and then compute $v_i = (h(y) - m/2) \oplus r_i$. The full procedure for reading a RAM entry is given by \cref{alg:ram_read}, which takes $O(1)$ time in the worst case.

\begin{algorithm}[ht]
  \caption{Reading a RAM Entry}{\label{alg:ram_read}}
  \DontPrintSemicolon
  \SetKwProg{Fn}{Function}{:}{}
  \SetKwFunction{ReadRAM}{ReadRAM}
  \Fn{\ReadRAM{$i$}} {
    $x \gets A[\Index(i)]$\;
    \uIf(\tcp*[f]{$x$ is hashed to the $(k + 1)$-th partner bin}){$k \defeq h(x) - m/2 \ge 0$} {
      \Return $k \oplus r_i$\tcp*{$r_i$ are fixed random integers in $[m/2]$}\label{line:rand_shift}
    }\Else(\tcp*[f]{$x$ is a self-loop}){
      \Return ``RAM word $i$ is empty''\;
    }
  }
\end{algorithm}

Next, we describe how to perform a write to a RAM entry (see \cref{alg:ram_write}). Suppose we wish to write value $v_i$ to RAM word $i$. First, we need to erase the old value in this RAM word by breaking the old coupling pair containing the $(i + 1)$-th index slot. Assuming constant-time access to the retrieval data structure, this step takes $O(1)$ time in the worst case. Second, we encode the new value $v_i$ by forming a coupling pair between the element $x$ whose logical address is $\Index(i)$ and some self-loop element $y$ whose logical bin is the $\bk[\big]{(v_i \oplus r_i) + 1}$-th partner bin. We find $y$ by repeatedly sampling random elements from the bin until we get one that forms a self-loop. We can then complete the RAM write by swapping $x$ and $y$ so that they form a coupling pair, as in \cref{alg:ram_write}.

\begin{algorithm}[ht]
  \caption{Writing to a RAM Entry}{\label{alg:ram_write}}
  \DontPrintSemicolon
  \SetKwProg{Fn}{Function}{:}{}
  \SetKwFunction{WriteRAM}{WriteRAM}
  \Fn(\tcp*[f]{Write a new value $v_i$ into word $i$}){\WriteRAM{$i$, $v_i$}} {
    $x \gets A[\Index(i)]$\;
    \If{\ReadRAM{$i$} returns any non-empty value} {
      \tcp*{We need to erase the original value before writing the new value}
      Swap $x$ with $A[\FindPartner{x}]$ \tcp*{Break the coupling pair containing index slot $i$}
    }
    $y \gets$ find an arbitrary self-loop in the $\bk[\big]{(v_i \oplus r_i) + 1}$-th partner bin by random sampling\;
    Swap $x$ with $y$\;
  }
\end{algorithm}

Before continuing, we must verify that RAM writes succeed with high probability in $n$ (i.e., that there exists a self-loop $y$ for the RAM write to use), and that the RAM writes take constant expected time, assuming constant-time access to the retrieval data structure.

\begin{claim}
  \label{claim:ramwrite}
  At any given moment, we have with probability $1 - n^{-\omega(1)}$ that each partner bin contains at least $B/4$ self-loops. Conditioning on this event, each RAM write completes in $O(1)$ expected time.
\end{claim}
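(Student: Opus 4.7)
The plan is to establish the two assertions of the claim separately, with most of the work concentrated in the high-probability self-loop bound. Fix a partner bin with index $m/2 + k$ for $k \in [m/2]$. Since each coupling swap exchanges one key inside the bin with one key outside the bin, the physical occupancy of the bin equals the number of keys whose \emph{logical} bin is $m/2 + k$, which by \cref{claim:no_overflow_underflow} lies in $[B(1 - 2B^{-1/4}),\, B]$ with probability $1 - n^{-\omega(1)}$. A physical resident of this bin fails to be a self-loop precisely when it is the $x$-side of a coupling pair whose $y$-side had logical bin $m/2 + k$; hence the number of non-self-loops in the bin equals the number of RAM words $i$ with $v_i \oplus r_i = k$.

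The key step is to show that only $B/2 + O(B^{3/4})$ RAM words target any fixed partner bin, with probability $1 - n^{-\omega(1)}$. The stored values $v_i$ are determined entirely by the hash function $h$ together with the insertion/deletion history (they merely encode the content of the retrieval data structure), whereas the shifts $r_i = g(i)$ come from an independent fully random hash function $g$. After conditioning on $(h, \text{history})$, every $v_i$ is fixed while the $r_i$ remain i.i.d.\ uniform in $[m/2]$, so the indicators $\ind[v_i \oplus r_i = k]$ are mutually independent Bernoullis of parameter $2/m$. Consequently the number of RAM words mapped to bin $m/2 + k$ is stochastically dominated by $\mathrm{Bin}(n/4,\, 2/m)$, which has mean $B/2$, and a multiplicative Chernoff bound with deviation $B^{3/4}$ yields tail $\exp(-\Omega(B^{1/2})) = n^{-\omega(1)}$ whenever $B \ge \log^4 n$. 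A union bound over the $m/2 < n$ partner bins preserves this estimate, and combining it with the occupancy bound gives that every partner bin has at least
\[
B(1 - 2B^{-1/4}) - (B/2 + B^{3/4}) \;\ge\; B/4
\]
self-loops once $B$ exceeds a sufficiently large constant.

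For the expected-time statement, condition on the event above. The only non-$O(1)$-worst-case step in \cref{alg:ram_write} is the search for a self-loop in the target partner bin: sample a uniformly random slot in the bin, read the resident key $y$, and check whether $h(y) = m/2 + k$. This correctly identifies self-loops, because if $h(y) = m/2 + k$ then $y$'s logical bin is this partner bin, and any coupling pair involving $y$ would have relocated $y$ into an index slot in a different bin. Each trial succeeds with probability at least $(B/4)/B = 1/4$, so the expected number of trials is at most $4$, each costing $O(1)$ time (one hash evaluation and a comparison). The remaining steps of \cref{alg:ram_write} -- reading $A[\Index(i)]$, invoking \cref{alg:find_partner}, and two array swaps -- are each worst-case $O(1)$ given constant-time access to the retrieval data structure, so the total expected cost of a RAM write is $O(1)$.

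The main obstacle is the independence justification in the second paragraph: one must argue that the $v_i$ stay measurable with respect to $(h, \text{history})$ and do not secretly depend on $g$, even though the algorithm is allowed to read RAM and react adaptively. The encoding/decoding mechanism itself provides this, because \cref{alg:ram_read} returns the originally written $v_i$ (the XOR with $r_i$ on encoding cancels the XOR with $r_i$ on decoding), so the metadata logic above the RAM abstraction never sees $g$ directly, and conditioning on $(h, \text{history})$ simultaneously fixes every $v_i$. Once this is nailed down, the remainder is a routine Chernoff-plus-union-bound calculation layered on top of \cref{claim:no_overflow_underflow}.
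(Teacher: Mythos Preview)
Your proof is correct and follows essentially the same approach as the paper: bound the number of non-self-loops in each partner bin by a Chernoff bound over the randomness of the $r_i$, then union-bound over bins and conclude $O(1)$ expected sampling time. You are in fact more careful than the paper on two points---explicitly invoking \cref{claim:no_overflow_underflow} for the occupancy lower bound, and arguing why the $v_i$ are independent of $g$---whereas the paper's proof leaves both implicit and simply asserts mutual independence of the indicators.
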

\begin{proof}
  Consider the $(k + 1)$-th partner bin for some $k$, and let $v_0, v_1, \ldots, v_{n/4 - 1}$ be the values currently stored in RAM. The number of elements in the $(k + 1)$-th partner bin that are \emph{not} in self-loops is equal to 
  \[|\{i \mid v_i \oplus r_i = k \}|.\]
  Let $X_i = \ind[v_i \oplus r_i = k]$ and $X = \sum_{i=0}^{n/4 - 1} X_i$. Since $r_i$ are chosen independently and uniformly at random, we have that $\Pr[X_i] = 2/m$ and that $X_0, X_1, \ldots, X_{n / 4 - 1}$ are mutually independent. Thus $\E[X] = B/2$, and, by a Chernoff bound,
  \[
    \Pr\Bk*{X > \frac{3B}{4}} \le e^{-\Omega(B)} = n^{-\omega(1)}.
  \]
  Taking a union bound over all $k \in [m/2]$, we have that, with probability $1 - n ^{-\omega(1)}$, every partner bin contains at least $B/4$ self-loops. Finally, since each RAM-write samples random elements from a bin until it finds a self-loop, it follows that the expected time for a RAM-write is $O(1)$.
\end{proof}

Note that the proof of \cref{claim:ramwrite} reveals the role of the $r_i$s in the algorithm. By randomizing the partner bin $v_i \oplus r_i$ that gets used to encode RAM-word $v_i$, we ensure that no partner bin ends up being overused. This, in turn, is what allows us to obtain $O(1)$-time RAM writes.

\paragraph{Storing the retrieval data structure.}
Until now, we have assumed black-box access to the retrieval data structure, i.e., that the data structure is stored externally to our hash table. The final step in the RAM construction is to store the retrieval data structure itself in the RAM that we have implemented. It may seem unintuitive that this would be possible, since the RAM is implemented \emph{using} the retrieval data structure. However, we will show that with some care it is possible to store the retrieval data structure in RAM without introducing any cyclic dependencies in our algorithm.

We store the retrieval data structure of $O(n \log \log n)$ bits in the RAM. Following the implementation of \ReadRAM, we can query the retrieval data structure in $O(1)$ worst-case time. See \cref{alg:retrieval_read}.

\begin{algorithm}[ht]
  \caption{Querys to the Retrieval Data Structure}{\label{alg:retrieval_read}}
  \DontPrintSemicolon
  \SetKwProg{Fn}{Function}{:}{}
  \Fn{\QueryRetrieval{$x$}} {
    Call \ReadRAM to query $x$ on the retrieval data structure
  }
\end{algorithm}

Note that \cref{alg:retrieval_read} is already used in several of the procedures defined earlier, namely \cref{alg:find_partner} and (by extension) \cref{alg:ram_write}. Critically, however, \cref{alg:retrieval_read} is not needed to implement RAM-reads (\cref{alg:ram_read}), so \cref{alg:retrieval_read} does not introduce any cyclic dependencies into our algorithm.

Finally, in order that we can support insertions and deletions of keys later on, we must also show how to update the retrieval data structure in the RAM. We must be careful here, since if we edit the RAM words one by one, the retrieval data structure will briefly be in an inconsistent state, which will cause RAM writes to fail (because they need to access the retrieval data structure). The solution is to first simulate the whole updating process (i.e., determine which keys need to be relocated and where they need to be relocated to) without applying any changes to the physical data structure, and then to apply all changes to the hash table in a batch in the end. See \cref{alg:retrieval_write} for details.

{\LinesNotNumbered
  \begin{algorithm}[ht]
    \caption{Updating the Retrieval Data Structure}{\label{alg:retrieval_write}}
    \DontPrintSemicolon
    \SetKwProg{Fn}{Function}{:}{}
    \SetKwFunction{UpdateRetrieval}{UpdateRetrieval}
    \Fn{\UpdateRetrieval{}} {
      Simulate the update of the retrieval data structure by calling \ReadRAM and \WriteRAM.\\
      In this process:
      \begin{itemize}[nosep, leftmargin=2em]
      \item When \WriteRAM is going to relocate a key, we record the movement without actually\\ moving it.
      \item When \ReadRAM is probing a slot, it sees the slot in the state before the whole update,\\ as we have not applied any changes to the keys and slots.
      \end{itemize}
      After the simulation, apply all key relocations atomically.
    }
  \end{algorithm}
}

At this point, we can confirm that our final algorithm has no cyclic dependencies. In particular, the dependency graph for the procedures is given in Figure \ref{fig:dependency}. What makes the dependencies unintuitive is the interaction between the retrieval data structure and the RAM. Updates to the retrieval data structure require both reads and writes in RAM; writes to RAM require queries to the retrieval data structure; and queries to the retrieval data structure require reads to RAM.

\begin{figure}[ht]
  \centering
  \includegraphics[width = 0.6 \textwidth]{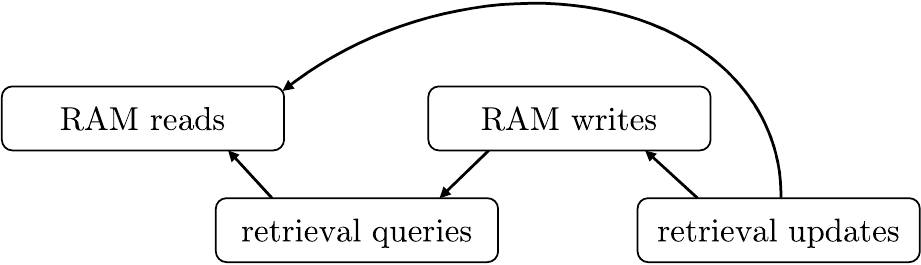}
  \caption{Dependency graph for the procedures.}
  \label{fig:dependency}
\end{figure}

Since each update to the retrieval data structure performs a constant expected number of RAM read/write operations, each requiring constant expected time (assuming no failures), \cref{alg:retrieval_write} also takes constant expected time.

\subsection{Key operations}
\label{sec:key_op}

Having implemented the RAM (and the retrieval data structure), we can now store additional metadata in the RAM in order to efficiently support key operations. 

\paragraph{Key queries.} 
Given a key $x$ that is in the hash table, we can recover $x$'s physical address by computing \FindPartner{$A[\FindPartner{x}]$}. Indeed, regardless of whether $x$ is in a self-loop or a coupling pair, $x$ is guaranteed to be its partner's partner; since $\FindPartner$ returns the physical address of a key's partner, it follows that \FindPartner{$A[\FindPartner{x}]$} is $x$'s physical address. The full procedure for querying a key is given in \cref{alg:key_query}. Note that special handling must be added for the case where $x$ is not in the hash table. 

\begin{algorithm}[ht]
  \caption{Key Queries}{\label{alg:key_query}}
  \DontPrintSemicolon
  \SetKwProg{Fn}{Function}{:}{}
  \SetKwFunction{QueryKey}{QueryKey}
  \Fn{\QueryKey{$x$}} {
    $s_1 \gets$ \FindPartner{$x$}\;
    \If {$A[s_1]$ is empty}{\Return ``$x$ is not in the hash table''}
    $s_2 \gets$ \FindPartner{$A[s_1]$}\;
    \uIf{$A[s_2] = x$} {
      \Return ``$x$ is in slot $s_2$''\;
    } \Else {
      \Return ``$x$ is not in the hash table''\;
    }
  }
\end{algorithm}

\paragraph{Key insertions.}
To support insertions, we will add one additional piece of metadata to RAM: for each bin $k$ we store the number $n_k$ of keys that currently hash to the bin.

Given a key $x$ that is not currently in the hash table, we can insert it by placing it in the $(n_k + 1)$-th slot of bin $k = h(x)$. This maintains the invariant that the keys occupy a prefix of the slots in the bin. Finally, to complete the insertion, we must update the retrieval data structure to store the pair $(x, n_k)$, and we must increment $n_k$. Note that we do not need to consider the case where the bin is already full ($n_k = B$), since we already consider this to be a failure event. The full insertion procedure is given by \cref{alg:key_insertion}.

\begin{algorithm}[ht]
  \caption{Key Insertions}{\label{alg:key_insertion}}
  \DontPrintSemicolon
  \SetKwProg{Fn}{Function}{:}{}
  \SetKwFunction{InsertKey}{InsertKey}
  \Fn{\InsertKey{$x$}} {
    $k \gets h(x)$\;
    $n_k \gets$ the number of keys stored in bin $k$\tcp*{$n_k$ is stored in the RAM}
    Update $A[kB + n_k] \gets x$\;
    Insert key-value pair $(x, n_k)$ into the retrieval data structure\;
    Update $n_k \gets n_k + 1$ in the RAM\;
  }
\end{algorithm}

\paragraph{Key deletions.}
Before we delete a key $x$ that is hashed to bin $k = h(x)$, we can always break the coupling pair containing $x$, so that $x$ becomes a self-loop (we say that we have \emph{decoupled} the pair). Decoupling a pair will erase a word in the RAM, which we make a copy of in a temporary variable. If this word is accessed in the subsequent steps, we access the temporary variable instead. Next, let $y$ be the key stored in the rightmost non-empty slot in bin $k$, i.e., $y = A[kB + n_k - 1]$. If $y$ is in a coupling pair, we decouple that pair as well (and, once again, we store the erased RAM word as a temporary variable). Next, we delete $x$ and move $y$ to the initial place of $x$. (This maintains the invariant that the keys in the bin occupy a prefix of the slots.) Finally, we update the metadata as necessary, and re-encode the RAM words we have broken by decoupling pairs. The procedure is shown in \cref{alg:key_deletion}.

\begin{algorithm}[ht]
  \caption{Key Deletions}{\label{alg:key_deletion}}
  \DontPrintSemicolon
  \SetKwProg{Fn}{Function}{:}{}
  \SetKwFunction{DeleteKey}{DeleteKey}
  \Fn{\DeleteKey{$x$}} {
    \If{$x$ is in a coupling pair} {
      Swap $x$ with its partner. This will erase a word in the RAM. We copy the RAM word's value to a temporary variable. If this erased word is accessed in the following steps, we access the temporary copy instead.\;\label{line:erase_word}
    }
    $k \gets h(x)$\;
    $n_k \gets$ the number of keys stored in bin $k$\;
    \If{$A[kB + n_k - 1]$ is in a coupling pair} {
      Swap $A[kB + n_k - 1]$ with its partner, and make a temporary copy of the erased word\;\label{line:erase_word_2}
    }
    $t \gets \QueryRetrieval{x}$ \tcp*{$x = A[kB + t]$}
    $y \gets A[kB + n_k - 1]$\;
    Remove $x$ from the table, and move $y$ to $A[kB + t]$ (the slot formerly occupied by $x$)\;
    Delete key $x$ from the retrieval data structure\;
    Update the value associated with $y$  in the retrieval data structure to be $t$\;
    \tcp*{The logical address of $y$ is now $kB + t$}
    Update $n_k \gets n_k - 1$ in the RAM\;
    For all words erased in \cref{line:erase_word,line:erase_word_2}, call \WriteRAM to reencode them in RAM\;
  }
\end{algorithm}

\subsection{Putting the pieces together}

In the previous subsections, we have seen the techniques to encode a RAM and support key operations, as long as there is no failure event. Next, we put the pieces together to prove \cref{thm:simple}.

\ThmSimple*

\begin{proof}
  As mentioned in \cref{sec:encode_ram}, we divide all $n$ slots into $m$ bins, each consisting of $B$ slots. We use a hash function $h : U \to [m]$ to map keys to bins. By forming coupling pairs between certain keys, we encode a RAM of $\Theta(n \log n)$ bits with word-size $\Theta(\log n)$. We store some metadata in the RAM, including a retrieval data structure and others, to support key operations, as introduced in \cref{sec:key_op}.

  The procedures in \cref{sec:key_op} perform all key operations correctly in constant expected time, unless any of the following \emph{failure events} occur.
  \begin{itemize}
  \item Bin overflow and underflow: the number $n_k$ of keys that hash to a certain bin $k \in [m]$ becomes larger than $B$ or smaller than $B/2$.
  \item Overusing partner bins: for some $k \in [m/2]$, the number of self-loops in the $(k+1)$-th partner bin goes below $B/4$.
  \end{itemize}
  We detect the failure events by maintaining metadata in the RAM. Specifically, we maintain in the RAM (1)~the number $n_k$ of keys that hash to bin $k$ (these are also used for the key operations in \cref{sec:key_op}), and (2)~the number of self-loops in each partner bin. When any failure event occurs, we reconstruct the entire hash table using a different hash function. Due to \cref{claim:no_overflow_underflow,claim:ramwrite}, both failure events occur with $n^{-\omega(1)}$ probability. Thus, the expected reconstruction time is a negligible $o(1)$.

  Note that our hash table (as presented in this section) supports $n\bk{1 - B^{-1/4}} \pm O(1)$ keys, but, somewhat unintuitively, cannot support fewer than that. (This is one of the issues that we will solve later on with a more sophisticated version of the data structure.) The problem is that the hash table relies on the encoded RAM, which does not work when the load factor becomes too small. Thus, to complete the theorem, we must specify how to initialize the hash table with $n\bk{1 - B^{-1/4}} \pm O(1)$ keys at once. In the initialization process, we compute all metadata that should be stored in the RAM, and encode all RAM words at once (see \cref{alg:init}). The initialization uses $O(n \log n)$ bits of external scratch space (to compute the data that should be encoded in RAM) and takes linear expected time.

  \begin{algorithm}[ht]
    \caption{Initialization}{\label{alg:init}}
    \DontPrintSemicolon
    \SetKwProg{Fn}{Function}{:}{}
    \SetKwFunction{Initialize}{Initialize}
    \Fn(\tcp*[f]{The number of keys is $m = n (1 - B^{-1/4}) \pm O(1)$}){\Initialize{$\{x_0, x_1, \ldots, x_{m-1}\}$}} {
      \For{$i=0$ to $m-1$}{
        Put key $x_i$ into the first empty slot in bin $h(x_i)$
      }
      Compute the sequence of words $(v_0, v_1, \ldots, v_{m/4 - 1})$ that should be encoded in the RAM\;
      \For{$i=0$ to $m/4-1$} {
        Swap $A[\Index(i)]$ with any self-loop in the $\bk[\big]{(v_i \oplus r_i) + 1}$-th partner bin
      }
    }
  \end{algorithm}

  Lastly, in addition to expected constant time, we note that the queries take constant time in the worst case. This is because the query procedure \cref{alg:key_query} and its subroutines \cref{alg:find_partner,alg:retrieval_read} all take $O(1)$ worst-case time.
\end{proof}

\section{A Fixed-Size Hash Table with Strong Guarantees}
\label{sec:fixed_size}

In this section, we show how to eliminate several of the deficiencies of the hash table from the previous section. 

\begin{theorem}
  \label{thm:fixed_size}
      Assuming access to a constant number of $O(1)$-wise independent hash functions, there is an open-addressing hash table that maintains $n$ or $n - 1$ keys in $n$ slots, supports insertions and deletions in $O(1)$ time, with high probability in $n$, and supports queries in $O(1)$ worst-case time. 
\end{theorem}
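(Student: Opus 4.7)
The plan is to extend the warmup of \cref{thm:simple} on three independent fronts: weakening ``fully random hash functions'' to $O(1)$-wise independence, eliminating the $1/\poly\log n$ fraction of empty slots so that the table holds $n$ or $n-1$ keys in $n$ slots, and promoting the $O(1)$ expected time on insertions and deletions to $O(1)$ with high probability. Hash-function independence is the mildest change: I would replace every fully random hash by a two-tier $O(1)$-wise independent construction (a top $O(1)$-wise independent hash splits keys into groups of expected size $\poly\log n$, and a second $O(1)$-wise independent hash runs inside each group), and verify that the $n^{-\omega(1)}$ tails used in \cref{claim:no_overflow_underflow,claim:ramwrite} still hold via standard moment-method arguments for $O(1)$-wise independence. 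The random shifts $r_i$ can be the output of an $O(1)$-wise independent hash evaluated at the index $i$, by the same argument.

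For the load-factor bump, I would employ a multi-level recursive structure. The prefix of the array of size $n_1 = n - \Theta(n/\poly\log n)$ runs the warmup at its natural load factor $1 - 1/\poly\log n$, which holds all but $\Theta(n/\poly\log n)$ of the keys. The remaining keys live in the $\Theta(n/\poly\log n)$-slot suffix of the array, where the same construction is applied recursively. The recursion bottoms out after $O(\log\log n)$ levels at a trivial table holding $\poly\log n$ keys. The boundary between two consecutive levels is recorded in $O(1)$ RAM words of the outer level, and with a careful choice of the level sizes the total slot count matches $n$ (or $n-1$, corresponding to at most one empty slot at the innermost level).

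For the high-probability time bounds, I would apply the buffering idea of Dietzfelbinger and Meyer auf der Heide (also used by Arbitman, Naor, and Segev in similar hashing contexts): maintain a bounded queue of pending RAM writes and retrieval-structure updates, and spend a constant amount of time per operation draining it. All excess-work events inside the warmup (bin overflow or underflow, partner-bin depletion, and a long sampling loop inside the RAM write) already occur with probability $n^{-\omega(1)}$, so spreading these rare excursions over a constant-length queue upgrades expected $O(1)$ into $O(1)$ with probability $1 - 1/\poly(n)$ per operation.

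The hard part will be reconciling the recursion with the RAM encoding at load factor $1$. In the warmup, the slack inside each bin (the ``empty suffix'' of offsets) is precisely what guarantees a large supply of self-loops in each partner bin, and this slack shrinks as the load factor approaches $1$. Consequently $B$, the index/partner bin split, and the definition of index slots all need to be retuned so that the $\ge B/4$ self-loops guarantee of \cref{claim:ramwrite} continues to hold at the higher occupancy. Moreover, insertions and deletions that cross a recursion boundary must disturb only $O(1)$ coupling pairs per operation---otherwise a single boundary-crossing operation would undo many RAM words and break the high-probability time bound. Getting these boundary mechanics right, so that the RAM remains consistently decodable at every intermediate state of every operation, will be the most delicate step.
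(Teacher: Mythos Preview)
Your three-front decomposition matches the paper's, but two of the fronts have concrete gaps.

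\textbf{Recursion and query time.} With levels shrinking by a $\poly\log n$ factor, the recursion depth is $\Theta(\log n/\log\log n)$, not $O(\log\log n)$; either way it is $\omega(1)$. A query for $x$ does not know in advance which level $x$ overflowed to, so absent extra machinery it must probe every level, and the $O(1)$ worst-case query bound is lost. (Storing a key-to-level map in the top-level RAM could repair queries, but you have not said so, and the deep levels would still be too small for their internal failure probabilities to be $1/\poly(n)$.) The paper avoids recursion entirely: it shrinks each bin so that it is always completely full, and routes all overflow keys into a single \emph{backyard} of size $\Theta(n/\log^{10} n)$. A second retrieval structure, itself stored in the encoded RAM, maps each backyard key to its slot, so a query checks exactly two places.

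\textbf{Buffering and where the queue lives.} Your diagnosis that ``a long sampling loop inside the RAM write already occurs with probability $n^{-\omega(1)}$'' is off: the loop is a geometric random variable with constant success probability, so it exceeds any fixed constant with \emph{constant} probability. The issue is not rare catastrophes but routine variance, and a buffer is indeed the cure---but the buffer must itself be writable in $O(1)$ time with high probability, and the only storage you have is the encoded RAM whose writes are the very thing you are trying to speed up. The paper breaks this circularity by splitting the bins into two groups, each encoding its own basic RAM. The first (\emph{dense}) RAM holds the metadata; the second (\emph{sparse}) RAM holds at most $O(\sqrt{B})$ non-empty words, so in every one of its partner bins essentially all elements are self-loops and the sampling step succeeds in $O(1)$ rounds w.h.p. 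The buffer queue and all pending updates live in the sparse RAM; each operation writes there first (fast w.h.p.) and then spends a fixed budget of samples draining updates into the dense RAM.

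Finally, on independence: with bins of size $\poly\log n$, the $c$-th-moment bound $\Pr[|X-\mu|>t]\le O(\mu/t^2)^{c/2}$ is $1/\poly(n)$ only for $t=n^{\Omega(1/c)}\gg\mu$, so \cref{claim:no_overflow_underflow} does not survive as stated, and your two-tier hash does not obviously repair it. The paper instead enlarges the bins to $B=n^{\delta}$, which makes the moment bound useful; the resulting $\delta\cdot O(n\log n)$-bit retrieval structure for $\Theta(\log B)$-bit offsets still fits in the $\Theta(n\log n)$-bit RAM once $\delta$ is small enough.
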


Comparing to \cref{thm:simple}, this above theorem achieves better guarantees in the following perspectives.
\begin{itemize}
\item We achieve a load factor of $1$ or $1 - 1/n$, instead of $1 - 1 / \poly \log n$ in the previous section.
\item We only need to access $O(1)$-independent hash functions, instead of fully-independent hash functions.
\item All operations take constant time with high probability, instead of only achieving constant expected time.
\end{itemize}

It is worth noting that, even with these improvements, we will will not yet be at our final data structure. In particular, the final data structure (introduced in Section \ref{sec:resizing}) will also support dynamic resizing so that the number of elements can change arbitrarily over time.

The guarantees in the current section are achieved by making three changes to the algorithm introduced in \cref{sec:simple}. Before presenting all of the details, we begin with a high-level summary of the changes below:

\paragraph{Raising the load factor with a backyard.} Recall that in \cref{sec:simple}, every bin has capacity $
B$, while the number of keys hashed to each bin is within $[B - 2B^{3/4},\, B]$ with high probability, which means that a $\Theta(B^{-1/4})$ fraction of slots are left empty. Now, we shrink the capacity of each bin to $B' \defeq B - 2B^{3/4}$. This means that every bin $k \in [m]$ will be completely filled with keys that hash to it, and that there will also be $\Theta(B^{3/4})$ keys that hash to bin $k$ but could not be accommodated due to the capacity -- we call these keys the \emph{overflown keys}. Alongside the $m$ shrinked bins, we add a \emph{backyard} to collect the $\Theta(n/B^{1/4})$ overflown keys from all $m$ bins.

In order to maintain the keys in the backyard, we use $O(\log n)$ bits per key to construct a retrieval data structure that maps every (overflown) key to its address in the backyard. Since the size of the backyard is small enough, this retrieval data structure fits into our encoded RAM of $O(n \log n)$ bits. We also need linked lists to maintain the set of overflown keys that hash to each bin $k$, because when some key in bin $k$ is deleted, we need to ``pull back'' an overflown key from the backyard -- the linked list for bin $k$ will help find such an element.

By introducing the backyard, no slot is empty in the hash table, so we can now support a load factor of up to $1$. In order to simplify the discussion for this section, we will require that the load factor is either $1 - 1/n$ or $1$ at all times (but we will also remove this requirement in the next section).

\paragraph{Increasing the bin size to work with \boldmath$O(1)$-independent hash functions.} In \cref{sec:simple}, we set the bin size $B = \poly \log n$, which, if we assume fully random hash functions, is sufficient to bound the deviation of the number of keys that hash to a certain bin. However, if the hash functions are only $O(1)$-independent, then in order to get usable concentration bounds on the number of keys that hash to each bin, we need a larger bin size. In this section, we will increase the bin size to $B = n^{\Omega(1)}$.

Larger bin sizes bring a new issue: in \cref{sec:simple} we used a retrieval data structure to maintain logical offsets of the keys, which take $O(n \log \log n)$ bits in total, because each offset is $O(\log \log n)$-bit long. However, after increasing the bin size to $B = n^{\Theta(1)}$, each offset within a bin has $\Theta(\log n)$ bits. Thus the space used by the retrieval data structure becomes $\Theta(n \log n)$, which does not clearly fit in a RAM of $\Theta(n \log n)$ bits, because the latter $\Theta(n \log n)$ has a small leading factor.

Fortunately, by setting $B = n^{\delta}$ for a sufficiently small constant $\delta > 0$, we can reduce the leading factor in the space usage of the retrieval data structure, while preserving the leading factor in the size of the RAM. Thus, we can still use a retrieval data structure to maintain the logical offsets.\footnote{The retrieval data structure occupies a constant fraction of our encoded RAM. It is worth noting that, with a bit more involvement, there are also more space efficient solutions that one could use instead of the retrieval data structure. In particular, one could organize the keys in each bin so that their logical addresses have the same layout as if the bin was implemented as the high-performance hash table in \cite{bender2022optimal}; this would allow for the keys' logical addresses to be recovered using only $O(n \log \log n)$ bits of additional metadata stored in RAM.}

\paragraph{Dense and sparse RAMs.} In \cref{sec:simple}, we have achieved constant expected time per operation. Actually, there is only one step that is not with high-probability constant time: When we write a value $v_i$ to the RAM word $i$, we need to find a self-loop in the $((v_i \oplus r_i) + 1)$-th partner bin, which will form a coupling pair with the key in $A[\Index(i)]$. Such a self-loop is found by sampling random elements in the bin and checking if they are self-loops. Since there are a constant fraction of self-loops in every partner bin, the sampling process will succeed in $O(1)$ rounds in expectation -- but not with high probability.

To make the RAM writes high-probability constant-time, we divide all bins into two identical groups, each encoding a \emph{basic RAM} of $O(n)$ words using the construction introduced in \cref{sec:simple}. The first group implements the \emph{dense RAM} that stores most of the information, but writing a word to the dense RAM is not high-probability constant-time. The second group implements a \emph{sparse RAM}, in which most of the words are empty (i.e., most of the index slots contain self-loops), while only $O\bk[\big]{\sqrt{B}} = O(n^{\delta/2})$ words have values. The sparse RAM serves as a buffer for the recent updates to the RAM words, which have not been applied to the dense RAM yet. This idea of buffering work on a small but fast secondary data structure has also been used (albeit in different forms) in many past works on worst-case constant-time hash tables \cite{dietzfelbinger1990new, ArbitmanNaSe10, arbitman2009amortized,bender2024modern}.

Since the sparse RAM only has $O\bk[\big]{\sqrt{B}}$ coupling pairs, in any given partner bin in the sparse RAM, at least a $(B - O(\sqrt{B}))/B = 1 - O(1/n^{\delta/2})$ fraction of elements will be self-loops. Therefore, to find a self-loop in a given bin, random sampling takes $O(1)$ time with high probability, which implies that we can write a word to the sparse RAM in $O(1)$ time with high probability.

In order to make an update to the RAM, we first write the update to the sparse RAM. Then, we spend a constant amount of work to apply the buffered updates stored in the sparse RAM to the dense RAM. With high probability, at any given moment, the sparse RAM will only contain a small number of buffered updates.

\bigskip

With these three modifications, we can achieve the guarantees in \cref{thm:fixed_size}. Throughout the main proof, we will assume the universe $U$ has size $\poly n$, which we know from \cref{sec:prelim} is without loss of generality.

\subsection{The full construction}

Let $B = \Theta(n^{\delta})$ where $\delta \in (0, 0.1)$ is a constant to be determined later. We divide all $n$ slots into
\begin{itemize}
\item a \emph{backyard} of $\Theta(n / \log^{10} n)$ slots; and
\item $m$ bins of $B$ slots each, where $mB = n - \Theta(n / \log^{10} n)$, and where $m$ is a power of two. All these bins are called the \emph{frontyard}.
\end{itemize}
We label the bins with $0, \ldots, m-1$, and number the slots in bin $k$ with integers in $[kB, (k+1)B)$. The slots in the backyard are labeled with integers in $[mB, n)$, so the numbers for all slots cover $0, \ldots, n-1$.

The storage structure is similar to \cref{sec:simple}. There is a \emph{$c$-wise independent hash function} $h : U \to [m]$ that maps every key into one of the slots, where $c = \Theta(1)$ is sufficiently large relative to $\delta^{-1}$. In the logical layout, bin $k$ should be filled with keys that are hashed to bin $k$. We denote by $n_k$ the number of keys hashed to bin $k$. We will show that with high probability, $n_k \ge B$ at any moment. The $n_k - B$ of these keys that cannot be accommodated in bin $k$ in the logical layout will be stored in the backyard in arbitrary order. The hash table fails when $n_k < B$.

\begin{claim}
  \label{claim:no_underflow}
 We have with high probability in $n$ that the number of keys hashed to every bin is at least $B$.
\end{claim}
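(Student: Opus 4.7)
The plan is to compute $\E[n_k]$, observe that it exceeds $B$ by a polynomial-in-$n$ margin, apply a standard tail bound for sums of $c$-wise independent indicators, and then union-bound over the $m$ bins. First, since $h$ is (at least) pairwise independent, every stored key hashes to bin $k$ with probability $1/m$, so
\[
  \E[n_k] \ge \frac{n-1}{m} = \frac{(n-1)B}{mB} = \frac{(n-1)B}{n - \Theta(n/\log^{10} n)} = B + \Theta\bk*{\frac{B}{\log^{10} n}}.
\]
Setting $\lambda \defeq \E[n_k] - B = \Theta(n^{\delta}/\log^{10} n)$, I have $\{n_k < B\} \subseteq \{\abs{n_k - \E[n_k]} \ge \lambda\}$, so it suffices to bound the latter.

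Next, since $n_k = \sum_{x} \ind[h(x) = k]$ is a sum of $c$-wise independent $\{0,1\}$ variables, I would invoke the standard $c$-th central moment tail bound (Bellare-Rompel-style): for even $c$ with $c \le \lambda^2/\E[n_k]$,
\[
  \Pr\bk[\big]{\abs{n_k - \E[n_k]} \ge \lambda} \;\le\; O\bk*{\frac{c\,\E[n_k]}{\lambda^2}}^{c/2} \;=\; O\bk*{\frac{c \log^{20} n}{n^{\delta}}}^{c/2}.
\]
The side condition is harmless because $\lambda^2/\E[n_k] = \Theta(n^{\delta}/\log^{20} n) \to \infty$. Choosing $c$ to be a sufficiently large constant relative to $\delta^{-1}$ (precisely what the paper already allows), the bound becomes at most $n^{-(K+1)}$ for any desired high-probability exponent $K$, and a union bound over the $m \le n$ bins yields $\Pr[\exists k:\, n_k < B] \le n^{-K}$, which is the required guarantee.

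The one mildly delicate point, which I view as the main obstacle, is verifying the hypothesis of the $c$-wise tail bound: the gap $\lambda$ must dominate $\sqrt{c\,\E[n_k]}$ by a polynomial factor to make the exponent $c/2$ buy us anything. This is precisely why the construction reserves a backyard of size $\Theta(n/\log^{10} n)$ rather than something smaller -- the resulting shortfall $\E[n_k] - B$ is a genuine polynomial in $n$, namely $\Theta(n^{\delta}/\log^{10} n)$, which easily dominates $\sqrt{c\,\E[n_k]} = O(n^{\delta/2})$ for large $n$. Once this polynomial slack is in hand, the rest of the argument is exponent bookkeeping, and the proof invokes $h$ only through the $c$-wise independence that the construction already assumes.
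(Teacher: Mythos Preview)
Your proof is correct and follows essentially the same approach as the paper: compute the expected bin load, note that it exceeds $B$ by $\Theta(n^{\delta}/\log^{10} n)$, apply a $c$-th-moment tail bound for $c$-wise independent indicators, and union-bound over the $m$ bins. The paper cites the Schmidt--Siegel--Srinivasan form of the bound while you cite the Bellare--Rompel form, and you are slightly more explicit about the side condition $\lambda^2 \gg c\,\E[n_k]$, but the arithmetic and the conclusion are the same.
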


\begin{proof}
  We first bound the probability that bin 0 ``underflows'' (less than $B$ keys hash to it). Let $X_i = \ind\Bk{\text{the $i$-th key $\hashto$ bin 0}}$ and $X \defeq \sum_{i=1}^{n} X_i = n_k$. We know $\E[X] = B \cdot \bk[\big]{1 + \Theta(1 / \log^{10} n)}$ and $\Var(X) \le \E[X]$. By concentration bounds for $c$-wise independent variables (see, e.g., Theorem 4 in \cite{schmidt1995chernoffhoeffding}), we have
  \begin{align*}
    \Pr\Bk*{\abs[\big]{X - \E[X]} \ge \frac{B}{\log^{10} n}}
    &\le O\bk*{\frac{\E[X]}{\bk[\big]{B/\log^{10} n}^{2}}}^{c/2}
    \le \bk*{\frac{n^{\delta(1 + o(1))}}{n^{2\delta(1 + o(1))}}}^{c/2}
    < n^{-\delta(c/2)(1 \pm o(1))} \ll n^{-\delta c/3}.
  \end{align*}
  So, bin 0 underflows with probability $n^{-\delta c/3}$. Taking a union bound over all $m$ bins, we know that with probability $1 - n^{-\delta c/3 + 1}$, no bin will underflow. Since $c$ is a sufficiently large constant (with respect to  $\delta^{-1}$), this is a high-probability bound in $n$.
\end{proof}

For the moment, we assume access to a retrieval data structure (see \cref{thm:retrieval}) that maps every key $x$ stored in the frontyard to its logical offset within the bin. Since every logical offset takes $\delta \log n$ bits to encode, the retrieval data structure uses $\delta \cdot O(n \log n)$ bits of space. Later, we will store the retrieval data structure in the RAM we implement, which has $\ge 0.08 n \log n$ bits. By choosing a small enough $\delta$, the retrieval data structure fits in the RAM.

The bins are divided into 2 groups: the first $m/2$ bins form the first group (group 0), and the last $m/2$ bins form the second group (group 1). Each group will independently implement a \emph{basic RAM}. To encode a basic RAM, we swap certain pairs of keys to form \emph{coupling pairs} as in \cref{sec:simple}. After implementing both basic RAMs, we will let them play the roles of \emph{dense RAM} and \emph{sparse RAM} respectively, which together provide the functionality of an \emph{advanced RAM} (i.e., a RAM with high-probability constant-time operations).

\paragraph{Basic RAM within each group.}
We first describe how to implement the basic RAM used in each group. The approach is the same as \cref{sec:simple}, except that now we use larger bins and $O(1)$-independent hash functions. By symmetry, we focus on the first group, i.e., bins $0$ to $m/2 - 1$. 

Each bin consists of $B$ slots, which contains $B$ keys hashing to it in the logical layout. The first $m/4$ bins are regarded as \emph{index bins}, while the last $m/4$ bins are regarded as \emph{partner bins}. We say the first half of the slots in each index bin are \emph{index slots}, i.e., the slots in index bins with offset $t < B/2$. There are $mB/8$ index slots. We let $\Index(i)$ denote the slot number of the $(i + 1)$-th index slot.

Let $r_0, \ldots, r_{mB/8 - 1}$ be $c$-wise independent random numbers in $[m/4]$. They are obtained via a $c$-wise independent hash function $r : [mB/8] \to [m/4]$.

The basic RAM will encode a sequence of words $v_0, \ldots, v_{mB/8-1} \in [m/8] \cup \BK{\bot}$, where ``$\bot$'' denotes that the word is empty. For each word $i$, if $v_i$ is not empty, we find key $x$ that is logically stored in $A[\Index(i)]$, and swap it with an arbitrary self-loop $y$ in the $((v_i \oplus r_i) + 1)$-th partner bin. If $v_i = \bot$, we just leave $x = A[\Index(i)]$ as a self-loop.

\cref{alg:basic_ram} shows the procedures to read or write a word to the basic RAM. The procedure of writing calls \FindPartner (see~\cref{alg:find_partner}) as a subroutine, which further queries the retrieval data structure.

\begin{algorithm}[ht]
  \caption{Basic RAM Operations}{\label{alg:basic_ram}}
  \DontPrintSemicolon
  \SetKwProg{Fn}{Function}{:}{}
  \SetKwFunction{ReadBasicRAM}{ReadBasicRAM}
  \Fn(\tcp*[f]{$g \in \{0, 1\}$ is the group number}){\ReadBasicRAM{$g, i$}} {
    $x \gets A[gmB/2 + \Index(i)]$ \tcp*{the $i$-th index slot in group $g$}
    \uIf(\tcp*[f]{$x$ is hashed to the $(k + 1)$-th partner bin}){$k \defeq h(x) - m/4 - gm/2 \ge 0$} {
      \Return $k \oplus r_i$ \tcp*{$r_i$ are $c$-wise independent random integers in $[m/4]$}\label{line:rand_shift/basic}
    }\Else(\tcp*[f]{$x$ is a self-loop}){
      \Return $v_i = \bot$ \tcp*{word $i$ is empty}
    }
  }
  \SetKwFunction{WriteBasicRAM}{WriteBasicRAM}
  \Fn(\tcp*[f]{Write a new value $v_i$ into word $i$}){\WriteBasicRAM{$g$, $i$, $v_i$}} {
    $x \gets A[gmB/2 + \Index(i)]$ \tcp*{the $i$-th index slot in group $g$}
    \If(\tcp*[f]{need to erase the original value}){\ReadBasicRAM{$g, i$} returns any non-empty value} {
      Swap $x$ with $A[\FindPartner{x}]$ \tcp*{Break the coupling pair containing index slot $i$}
    }
    \If{$v_i \ne \bot$} {
      $y \gets$ find a self-loop in the $\bk[\big]{(v_i \oplus r_i) + 1}$-th partner bin (in group $g$) by random sampling \label{line:sampling/basic} \;
      Swap $x$ with $y$ \;
    }
  }
\end{algorithm}

This construction works as long as every partner bin contains a constant fraction of self-loops -- this is for \cref{line:sampling/basic} to find a self-loop in constant expected time. We show that this condition holds with high probability in $n$.

\begin{claim}
  \label{claim:ramwrite/basic}
  At any given moment, with high probability in $n$, each partner bin contains at least $B/4$ self-loops. Conditioning on this event, each RAM write completes in $O(1)$ expected time.
\end{claim}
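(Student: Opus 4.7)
The plan is to mirror the concentration argument used for \cref{claim:no_underflow}, now applied to a partner bin rather than a frontyard bin. Fix a partner bin in group $g$, say the $(k+1)$-th one. A key sitting in this bin fails to be a self-loop exactly when some basic-RAM word has been encoded using it, and by the encoding rule index slot $i$ (in group $g$) contributes a non-self-loop to this partner bin precisely when $v_i \ne \bot$ and $v_i \oplus r_i = k$. Letting
\[
X_i \defeq \ind[v_i \ne \bot \text{ and } v_i \oplus r_i = k], \qquad X \defeq \sum_{i \in [mB/8]} X_i,
\]
the number of non-self-loops in the partner bin equals $X$, and it suffices to show $X \le 3B/4$ with high probability in $n$.

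The key step is to argue that the RAM contents $\{v_i\}$ are measurable with respect to $h$ and the input stream alone, independently of $r$: the basic RAM stores only semantic hash-table quantities (retrieval offsets within logical bins, bin-occupancy counts $n_k$, and the contents that the sparse RAM will later hold), all of which are determined from the keys and $h$ before any particular partner self-loop is chosen. Conditioned on $\{v_i\}$, the indicators $X_i$ are therefore $c$-wise independent (by $c$-wise independence of $r$), each with probability at most $4/m$, so $\E[X] \le B/2$. Applying the standard moment inequality for $c$-wise independent sums (Theorem 4 of \cite{schmidt1995chernoffhoeffding}, exactly as in \cref{claim:no_underflow}),
\[
\Pr\Bk*{X \ge \tfrac{3B}{4}} \le \Pr\Bk[\big]{\abs{X - \E[X]} \ge B/4} \le O\bk*{\frac{\E[X]}{(B/4)^2}}^{c/2} = O(B^{-1})^{c/2} = n^{-\Omega(\delta c)}.
\]
A union bound over all $m/2$ partner bins across the two groups, together with $c$ chosen sufficiently large relative to $\delta^{-1}$, yields that with high probability in $n$ every partner bin contains at least $B/4$ self-loops.

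For the expected-time half, once every partner bin has at least $B/4$ self-loops, a uniformly random sample from that bin is a self-loop with probability at least $1/4$, so the rejection-sampling step at \cref{line:sampling/basic} succeeds within $O(1)$ trials in expectation. The remaining lines of WriteBasicRAM all run in $O(1)$ worst-case time given constant-time access to the retrieval data structure (in particular FindPartner takes $O(1)$ time under that assumption), so the overall expected write time is $O(1)$.

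The main obstacle I anticipate is precisely the independence claim between $\{v_i\}$ and $\{r_i\}$: if one is sloppy and allows $v_i$ to depend on $r$ (say, because WriteBasicRAM internally queries the retrieval data structure, which will itself be stored in RAM once dependencies are closed), the concentration bound collapses. Once one verifies that the \emph{logical} values written to the RAM are determined by $h$ and the operation sequence — not by how couplings were physically realized — the rest of the proof is a direct transcription of the argument behind \cref{claim:no_underflow}.
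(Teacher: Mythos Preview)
Your proposal is correct and follows essentially the same argument as the paper: define $X_i = \ind[v_i \oplus r_i = k]$, use $c$-wise independence of the $r_i$'s to get $\E[X] \le B/2$, apply the moment bound from \cite{schmidt1995chernoffhoeffding} to obtain $\Pr[X > 3B/4] \le O(1/B)^{c/2}$, and union-bound over the partner bins. Your explicit discussion of why the $\{v_i\}$ can be treated as independent of $\{r_i\}$ is a point the paper leaves implicit, and it is a reasonable addition (though note that at the level of this claim the basic RAM is treated as an abstraction with adversarially fixed $v_i$'s, so the justification need not yet appeal to the specific metadata stored later).
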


\begin{proof}
  We consider the $(k+1)$-th partner bin for a certain $k \in [m/4]$, and recall that $v_0, \ldots, v_{mB/4 - 1}$ are the sequence of words encoded by the basic RAM. The number of non-self-loop elements in the bin equals
  \[
    |\BK{i \mid v_i \oplus r_i = k}|.
  \]
  We define $X_i = \ind[v_i \oplus r_i = k]$ and $X = \sum_{i=0}^{mB/8 - 1} X_i$. Since $r_0, \ldots, r_{mB/8 - 1}$ are $c$-wise independent, we know $\Pr[X_i] = 4/m$ and $X_0, \ldots, X_{mB/8 - 1}$ are also $c$-wise independent, thus $\E[X] = B/2$. By standard concentration bounds (e.g., Theorem 4 in~\cite{schmidt1995chernoffhoeffding}), we have
  \[
    \Pr\Bk*{X > \frac{3B}{4}} \le O\bk*{\frac{1}{B}}^{c/2} = O(n^{-\delta c/2}).
  \]
  Taking a union bound over $m/4$ partner bins, we know that with probability $1 - \Omega(n^{-\delta c / 2 + 1})$, every partner bin has at least $B/4$ self-loops. Since $c$ is a sufficiently large constant multiple of $\delta^{-1}$, this is a high-probability bound in $n$.
\end{proof}

The basic RAM reads take constant time in the worst case. Writes, on the other hand, take $O(1)$ time in expectation, but not with high probability. However, when the basic RAM is sparse enough, writes also take constant time with high probability:

\begin{lemma}
  \label{lem:sparse_ram}
  In a basic RAM, if among all encoded words $v_0, \ldots, v_{mB/4 - 1}$, at most $O\bk[\big]{\sqrt B} = O(n^{\delta / 2})$ words are non-empty, then \WriteBasicRAM takes $O(1)$ time to finish with high probability in $n$.
\end{lemma}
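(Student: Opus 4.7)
The plan is to observe that the only step in \WriteBasicRAM whose running time is not worst-case $O(1)$ is the random sampling on line~\ref{line:sampling/basic}, which repeatedly draws uniform random elements from the target partner bin until it finds a self-loop. All other steps (the index-slot lookup, \FindPartner and the retrieval query it invokes, and the two swaps) take worst-case $O(1)$ time. So the whole task reduces to bounding, with high probability, the number of samples needed to hit a self-loop.

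First, I would count non-self-loops in partner bins. By construction, each non-empty word $v_i$ of the basic RAM corresponds to exactly one coupling pair: the key at index slot $i$ is swapped with some key in partner bin $(v_i \oplus r_i)+1$, and both members of the pair are non-self-loops. In particular, the number of non-self-loops \emph{across all partner bins} equals the number of non-empty words, which by assumption is $O(\sqrt{B}) = O(n^{\delta/2})$. Hence \emph{any} single partner bin -- which has $B$ slots -- contains at most $O(\sqrt{B})$ non-self-loops, and therefore at least $B - O(\sqrt{B})$ self-loops. This bound holds regardless of which partner bin $(v_i \oplus r_i)+1$ the write targets, so it applies to the bin actually sampled.

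Next, I would analyze the sampling loop as a geometric process. Using fresh internal randomness, each sample independently falls on a non-self-loop with probability at most $O(\sqrt{B})/B = O(n^{-\delta/2})$. Thus the probability that the first $k$ samples all fail to be self-loops is at most $O(n^{-\delta k/2})$. For any desired failure exponent $c > 0$, choosing $k = \lceil 2(c+1)/\delta \rceil = O(1)$ makes this probability $O(n^{-(c+1)})$, which is high-probability in $n$. Hence, conditional on the sparsity assumption, \WriteBasicRAM terminates in $O(1)$ time with probability $1 - 1/\poly(n)$.

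The only delicate point -- and the one I would be most careful about -- is the accounting in the first step: making sure that the ``at most $O(\sqrt{B})$ non-self-loops'' bound really applies \emph{per partner bin}, not just globally. It does, because any single partner bin's non-self-loops are a subset of the global pool of coupling-pair members in partner bins, so a global bound of $O(\sqrt{B})$ trivially gives the same per-bin bound. Everything else (independence of the sampled positions from the RAM state, worst-case $O(1)$ for the non-sampling steps) is already established in \cref{sec:encode_ram} and carries over unchanged.
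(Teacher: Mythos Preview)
Your proposal is correct and follows essentially the same argument as the paper: identify the sampling step on \cref{line:sampling/basic} as the only non-constant step, bound the number of non-self-loops in any partner bin by the total number of coupling pairs $O(\sqrt{B})$, and conclude that $O(1)$ samples suffice with failure probability $O(n^{-\delta c/2})$. Your write-up is slightly more explicit about why the global bound on coupling pairs transfers to a per-bin bound, but the underlying reasoning is identical.
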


\begin{proof}
  The time bottleneck of \WriteBasicRAM is to find a self-loop in a certain bin. Because there are only $O\bk[\big]{\sqrt B}$ coupling pairs in the whole basic RAM, each sample will fail with probability $\le O\bk[\big]{1/\sqrt B}$. The probability that the sampling does not terminate within $c \gg 1/\delta$ rounds is $O(B^{-c/2}) = O(n^{-\delta c/2})$, i.e., \WriteBasicRAM takes $O(c) = O(1)$ time with high probability in $n$.
\end{proof}

\paragraph{Advanced RAM.} 
As we mentioned earlier, the first basic RAM serves as the \emph{dense RAM} and the second one serves as the \emph{sparse RAM}. They will together maintain a sequence of words $v_0, v_1, \ldots, v_{mB/10 - 1} \in [m/4]$, which we call the \emph{advanced RAM}, supporting reads in $O(1)$ worst-case time and writes in $O(1)$ time with high probability in $n$.

The first basic RAM -- the \emph{dense RAM} -- stores a sequence of words $w_0, w_1, \ldots, w_{mB/10 - 1} \in [m/4]$. They represent an \emph{outdated} state of the advanced RAM, where a small number of words might be different from the advanced RAM.

The second basic RAM -- the \emph{sparse RAM} -- stores $mB/8$ words from $[m/4] \cup \{\bot\}$. The first $mB/10$ words, namely $\tilde w_0, \ldots, \tilde w_{mB/10 - 1}$, record the updates that we should apply to the dense RAM. We call them \emph{buffered updates}. If $\tilde w_i \ne \bot$ for some word $i \in [mB/10]$, the corresponding word $w_i$ in the dense RAM is outdated -- word $v_i$ in the advanced RAM should equal to the buffered update $\tilde w_i$ instead. Otherwise, when $\tilde w_i = \bot$, word $v_i$ in the advanced RAM equals the dense RAM word $w_i$. The last $mB / 40$ words in the sparse RAM, which are \emph{not} buffered updates, store a queue of all locations $i \in [mB/10]$ with buffered updates $\tilde w_i \ne \bot$. It is called a \emph{buffer queue}. See \cref{fig:advanced_RAM}.

\begin{figure}[ht]
  \centering
  \includegraphics[width = 0.7 \textwidth]{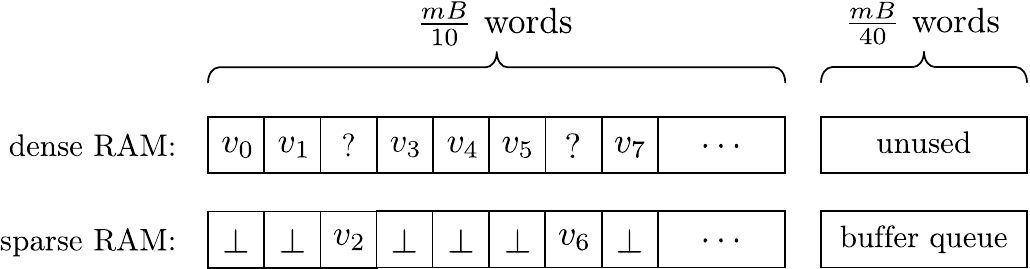}
  \caption[Advanced RAM]{Storing the advanced RAM using two basic RAMs. Each word $v_i$ ($i \in [mB/10]$) in the advanced RAM is either (1) stored in word $i$ of the dense RAM, while word $i$ of the sparse RAM remains empty; or (2) stored in word $i$ of the sparse RAM, while word $i$ of the dense RAM can be arbitrary. The last $mB/40$ words of the sparse RAM maintains the \emph{buffer queue}, which contains all locations $i \in [mB/10]$ with buffered updates in the sparse RAM.}
  \label{fig:advanced_RAM}
\end{figure}

\smallskip
When we read a word $v_i$ in the advanced RAM, we can read both words $w_i$ and $\tilde w_i$ in the dense RAM and the sparse RAM. This will determine the value of $v_i$ in worst-case $O(1)$ time.

When we update a word $v_i$, we first update the sparse RAM word $\tilde w_i$ using the new value, and also push $i$ into the buffer queue. This step takes $O(1)$ time with high probability in $n$. Then, as long as there are buffered updates in the sparse RAM, we spend $O(1)$ time to apply these updates to the dense RAM, thus making these words in the sparse RAM empty again. See \cref{alg:advanced_ram} for details.

\begin{algorithm}[ht]
  \caption{Advanced RAM Operations}{\label{alg:advanced_ram}}
  \DontPrintSemicolon
  \SetKwProg{Fn}{Function}{:}{}
  \SetKwFunction{ReadAdvancedRAM}{ReadAdvancedRAM}
  \Fn{\ReadAdvancedRAM{$i$}} {
    $\tilde w_i \gets$ \ReadBasicRAM{$1, i$} \tcp*{Read the sparse RAM}
    \If{$\tilde w_i \ne \bot$} {
      \Return $\tilde w_i$ \;
    }
    \Return \ReadBasicRAM{$0, i$} \tcp*{Read the dense RAM}
  }

  \SetKwFunction{WriteAdvancedRAM}{WriteAdvancedRAM}
  \Fn(\tcp*[f]{Write a new value $v_i$ into word $i$}){\WriteAdvancedRAM{$i$, $v_i$}} {
    \If{\ReadBasicRAM{$1, i$} $= \bot$} {
      Push $i$ into the buffer queue \tcp*{New buffered update}
    }
    \WriteBasicRAM{$1, i, v_i$} \tcp*{First write the update into the sparse RAM}
    \While{buffer queue is non-empty} { \label{line:while}
      $j \gets$ buffer queue's front element \;
      $v_j \gets$ \ReadBasicRAM{$1, j$} \;
      Try \WriteBasicRAM{$0, j, v_j$}; give up when we have taken 100 random samples in total during the current \WriteAdvancedRAM execution \label{line:try_move} \;
      \uIf{\WriteBasicRAM has given up} {
        \Return \tcp*{No change to the sparse RAM}
      } \Else {
        \WriteBasicRAM{$1, j, \bot$} \tcp*{Delete the buffered update}
        Pop $j$ from the buffer queue \;
      }
    }
  }
\end{algorithm}

\smallskip

On \cref{line:try_move}, we try to perform \WriteBasicRAM to the dense RAM, moving buffered updates to the dense RAM. The most time-consuming step in \WriteBasicRAM is to find any self-loop in a specific bin -- it repeatedly samples a random element in the bin, and terminates when the sampled element is a self-loop, which occurs with at least $1/4$ probability in each round. When we run the random sampling process, we count the number of elements we have sampled so far in the entire while-loop (on \cref{line:while}), and stop when it reaches 100. Thus, the while-loop takes $O(1)$ time in the worst case, and the expected number of buffered updates that it moves is $\ge 25$.

The advanced RAM fails when the sparse RAM contains more than $\Omega\bk[\big]{\sqrt B} = \Omega(n^{\delta/2})$ buffered updates, in which case the premise of \cref{lem:sparse_ram} no longer holds. We show that this occurs with low probability.

\begin{claim}
  \label{claim:adv/no_fail}
  At any given moment, the sparse RAM contains at most $O(n^{\delta/2})$ buffered updates with high probability in $n$.
\end{claim}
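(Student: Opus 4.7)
The plan is to treat the buffer queue as a discrete-time queue with exactly one arrival per \WriteAdvancedRAM call and a departure process that, when the buffer is non-empty, removes a constant $> 1$ number of buffered updates per call in expectation. A standard drift-plus-concentration argument will then show the buffer size stays polylogarithmic -- much smaller than the $O(n^{\delta/2})$ threshold required.

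First, I would condition on the high-probability event $\mathcal{E}$ that, at every moment during the execution, every partner bin in the dense RAM contains at least $B/4$ self-loops. By \cref{claim:ramwrite/basic} applied to the dense RAM, together with a union bound over the $\poly(n)$ time steps of the execution, $\mathcal{E}$ holds with high probability in $n$. Conditioned on $\mathcal{E}$, each random sample performed on \cref{line:try_move} of \cref{alg:advanced_ram} hits a self-loop in its target partner bin with probability at least $1/4$; moreover, different samples use fresh random bits, so their outcomes are mutually independent regardless of the evolving data-structure state.

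Second, observe that each \WriteAdvancedRAM call has a sample budget of at most $100$ for moving buffered updates into the dense RAM. Let $X_t$ denote the number of buffered updates successfully transferred to the dense RAM during call $t$. Since each of the $100$ samples is an independent self-loop with probability $\geq 1/4$, $X_t$ stochastically dominates $Y_t - 1$ where $Y_t \sim \text{Binomial}(100, 1/4)$, so $\E[X_t] \geq 24$ and $X_t \in [0, 100]$. Meanwhile, each call adds at most one element to the buffer (new updates at already-buffered indices do not push into the queue). Letting $L_t$ denote the buffer size after call $t$, we have $L_{t+1} \leq \max(L_t + 1 - X_t,\, 0)$.

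Third, to bound $\Pr[L_t \geq k]$ for target $k = \Theta(n^{\delta/2})$, I would use the standard ``last-empty'' trick: let $s \leq t$ be the most recent index with $L_s = 0$ (taking $s = 0$ if no such index exists). Between $s$ and $t$ the buffer was always positive, so every call spent its full sample budget, giving $L_t \leq (t - s) - \sum_{i=s+1}^{t} X_i$. Thus $L_t \geq k$ forces both $(t - s) \geq k$ and $\sum_{i=s+1}^{t} X_i \leq (t - s) - k$. Since the $X_i$ stochastically dominate i.i.d.\ bounded variables with mean $\geq 24$, Hoeffding's inequality gives
\[
  \Pr\Bk*{\sum_{i=s+1}^{t} X_i \leq (t-s) - k} \leq \exp\bk{-\Omega(t-s)}.
\]
Summing this geometric tail over all window lengths $(t - s) \geq k$, and then union bounding over all $t$ in the $\poly(n)$-length execution, proves $L_t \leq O(\log^2 n)$ uniformly with high probability in $n$, which is far below $n^{\delta/2}$.

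The main subtlety lies in the first step: justifying that the per-sample success probability is truly $\geq 1/4$ \emph{independently across samples}, despite the evolving contents of the dense RAM. This is clean because $\mathcal{E}$ is determined by the hash functions $h$ and $r$ alone (which fix the layout of self-loops at each moment), while the random bits used to sample elements within a bin are fresh at each sample and independent of $\mathcal{E}$ and of all previous samples -- so once we condition on $\mathcal{E}$, the samples genuinely behave as independent Bernoulli trials of success probability $\geq 1/4$.
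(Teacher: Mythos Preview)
Your proposal is correct and follows essentially the same approach as the paper: both use the ``last-empty'' trick to reduce to a deviation bound on $\sum X_i$ over a window, apply a Chernoff/Hoeffding bound to get an $e^{-\Omega(\text{window length})}$ tail, and sum over window lengths. Your treatment is actually more careful than the paper's in that you explicitly condition on the event $\mathcal{E}$ from \cref{claim:ramwrite/basic} and discuss why the per-sample success indicators are independent of it (the paper leaves this implicit); you also extract the tighter $O(\log^2 n)$ bound, which the paper's argument would also yield but which the paper does not bother to state. One small inaccuracy in your final paragraph: $\mathcal{E}$ is not determined by $h$ and $r$ \emph{alone}---the number of non-self-loops in each dense partner bin also depends on the sequence of values $v_i$ written to the dense RAM---but what matters for your argument is that this count does not depend on the fresh sampling bits, and that is indeed the case.
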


\begin{proof}
  Suppose there are $k$ buffered updates after the $t$-th \WriteAdvancedRAM operation, and let $0 \le j \le t$ be the smallest $j$ such that, after the $(t - j)$-th operation, there were no buffered updates. We let $X_i$ denote the number of buffered updates processed during the $t$-th operation, so we know $0 \le X_i \le 100$ and $\E[X_i] \ge 25$. In the time interval $I = (t - j, \, t]$, the number of operations performed is $|I| = j$, and the number of buffered updates that have successfully been processed is $\sum_{i \in I} X_i$, so
  \begin{equation*}
  j = k + \sum_{i \in (t - j, t]} X_i.
  \end{equation*}
  To bound the probability of $k \ge \Omega(n^{\delta / 2})$, it suffices to bound the probability that, for some $j \ge 0$,
  \begin{equation}n^{\delta / 2} + \sum_{i \in (t - j, \, t]} X_i \le j.\label{eq:jk}
  \end{equation}
  This inequality is only possible for $j \ge n^{\delta/2}$. However, for $j \ge n^{\delta/2}$, by a Chernoff bound,
  \[\Pr\left[\sum_{i \in (t - j, \, t]} X_i \le j\right] = e^{-\Omega(j)}.\]
  Summing over $j \ge n^{\delta/2}$, we can bound the probability of \eqref{eq:jk} by $\sum_{j \ge n^{\delta/2}} e^{-\Omega(j)} \ll 1 / \poly(n)$.
\end{proof}

As long as the sparse RAM is sparse, \WriteBasicRAM on the sparse RAM takes constant time with high probability in $n$; moreover, the buffer queue always fits in the last $mB/40$ words in the sparse RAM. Therefore, the total running time for \WriteAdvancedRAM is also a constant with high probability, conditioned on the advanced RAM not failing.

\paragraph{Storing the retrieval data structure.}

So far, we have assumed black-box access to the retrieval data structure that stores the logical offsets of all keys in the frontyard (i.e., in the bins). Similar to \cref{sec:simple}, the next step is to store the retrieval data structure in the advanced RAM that we have constructed. \QueryRetrieval and \UpdateRetrieval will be implemented in the same way as \cref{alg:retrieval_read,alg:retrieval_write}, with \ReadAdvancedRAM and \WriteAdvancedRAM replacing \ReadRAM and \WriteRAM, respectively. \QueryRetrieval only relies on \ReadAdvancedRAM, while \UpdateRetrieval relies on both reading and writing the advanced RAM. As shown in \cref{fig:dependency/adv}, we do not incur cyclic dependencies.

\begin{figure}[ht]
  \centering
  \includegraphics[width = 0.5 \textwidth]{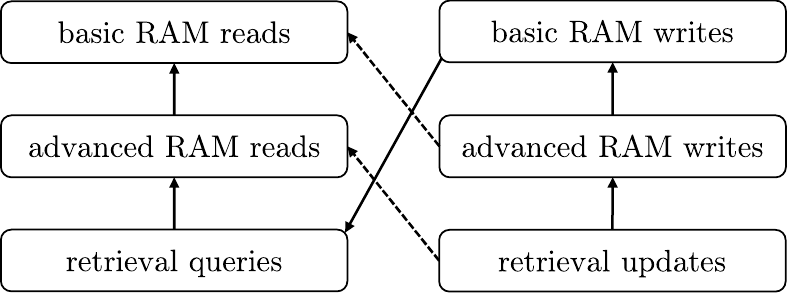}
  \caption{Dependency graph for the advanced RAM. Solid lines show a topological order.}
  \label{fig:dependency/adv}
\end{figure}

\smallskip

We conclude the construction of the advanced RAM by comparing the sizes of the retrieval data structure and the advanced RAM. The advanced RAM consists of $mB/10$ words, each storing a value in $[m/4]$, so it has at least
\[
  \frac{mB}{10} \cdot \log \frac{m}{4} \ge (1 - o(1)) \frac{n}{10} \log n^{1-\delta} \ge (1 - o(1)) \frac{n}{10} \log n^{0.9} > 0.08 n \log n
\]
bits. In contrast, according to \cref{thm:retrieval}, the number of bits in the retrieval data structure does not exceed $\beta (n \log B + n \log \log \frac{U}{n}) = (1 + o(1)) \cdot \beta \delta n \log n$. By choosing $\delta$ to be a small constant, say $1/(1000\beta)$, we know that the retrieval data structure occupies at most $0.01 n \log n$ bits of memory, which fits in the advanced RAM of $0.08 n \log n$ bits.

In summary, we have constructed an advanced RAM, together with a retrieval data structure that stores the logical offset for all keys in the frontyard. Besides storing the retrieval data structure, the advanced RAM provides $\ge 0.07 n \log n$ bits of additional space, which can support other metadata used to manage the keys (we will describe this metadata in a moment). Both reading and writing to the advanced RAM takes $O(1)$ time with high probability in $n$, and moreover, each read operation takes $O(1)$ time in the worst case.

\paragraph{Metadata for the backyard. }
In order to implement the backyard, we need to store some additional metadata in the advanced RAM. 

For each bin $k$, we maintain an \emph{overflow linked-list} of the overflow keys for that bin. The overflow linked lists are stored together (in advanced RAM) as an array of size $n/B + T$, where $T = O(n / \log^{10} n)$ is the size of the backyard, and where each entry of the array is $O(\log n)$ bits. The last $n / B$ entries of the array store the base pointers for the $n/B$ overflow linked lists, and the earlier entries store the internal nodes. If an internal node represents an element that is in some position $j$ of the backyard, then the internal node is, itself, stored in position $j$ of the array. The overflow linked lists take $O(T \log n)$ bits of space, where $T$ is the size of the backyard. 

Additionally, so that we can perform queries in the backyard, we store a \emph{backyard retreival data structure} mapping each backyard key to its position in the backyard. By \cref{thm:retrieval}, this retrieval data structure takes $O(T \log |U|) = O(T \log n)$ bits of space, where, again $T$ is size of the backyard. Since $T = O(n / \log^{10} n)$, the total size of these additional data structures is $O(n / \log^{9} n)$ bits.

Finally, if there is a free slot in the backyard (because there are $n - 1$ total keys present), we also store the position of that free slot in advanced RAM.

\paragraph{Key operations.} We can support key operations (insertions, deletions and queries) exactly as in \cref{sec:key_op}, but with the following modifications in order to handle the role of the backyard. 

Queries query both the frontyard, using the protocol from \cref{sec:key_op}, and the backyard, using the backyard retrieval data structure. This takes $O(1)$ worst-case time.

Whenever a key is inserted, it is placed directly in the backyard (which will, prior to the insertion, contain exactly one free slot). Finally, to complete the insertion, the appropriate overflow linked list and the backyard retrieval data structure are updated in $O(1)$ time.

Deletions, in turn, are implemented as follows. If the key is in the backyard, then we simply remove the key, update the appropriate overflow linked list, and update the backyard retrieval data structure appropriately, all in $O(1)$ time. If the key is in a frontyard bin $k$, then we delete the key using the protocol from \cref{sec:key_op}, we use the overflow linked list for bin $k$ to find a key $y$ satisfying $h(y) = k$ in the backyard, we delete $y$ from the backyard (using the deletion protocol for the backyard described above), and we insert $y$ into the frontyard (using the insertion protocol from \cref{sec:key_op}). If there are no overflow keys in the backyard, then we have experienced a low-probability failure (\cref{claim:no_underflow}), and we rebuild the entire data structure from scratch using new hash functions.

All operations take $O(1)$ worst-case time, with high probability. The only case where an operation takes $\omega(1)$ time is we experience a $1 / \poly(n)$-probability failure event (a bin overflows or a basic RAM fails), in which case we spend $O(n)$ time rebuilding the data structure from scratch using new hash functions. Putting the pieces together, we have Theorem \ref{thm:fixed_size}.

\section{Dynamic Resizing} \label{sec:resizing}

Finally, in this section, we show how to extend our data structure to support dynamic resizing, with load factor one at all times:

\begin{theorem}
  \label{thm:resizing}
  There exists a positive constant $c$ such that, given $c$ hash functions that are $c$-wise independent, one can construct an open-addressing hash table with the following guarantees. If, at any given moment, there are $n$ keys in the hash table, then the hash table uses $n$ slots of space; the hash table supports insertions and deletions in time $O(1)$, with high probability in $n$; and the hash table supports queries in time $O(1)$ in the worst case.
\end{theorem}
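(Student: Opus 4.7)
The plan is to lift Theorem~\ref{thm:fixed_size} to the dynamic setting via incremental rebuilding. At any moment with $N$ keys, I would maintain a ``main'' instance $\mathsf{FSH}(n_m)$ that stores $n_m$ or $n_m - 1$ keys in $n_m$ slots, where $n_m$ is a reference size tracking $N$ but updated only by periodic incremental rebuilds, together with an ``auxiliary'' region of $N - n_m$ slots holding the remaining keys. Routine insertions and deletions that change $N$ would be processed entirely within the auxiliary (grow or shrink the memory prefix by one and place or remove a key there); the main instance would be touched only during rebuilds.

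I would design the main so that $n_m$ can be adjusted in small increments of $B = n_m^{\delta}$ slots (one bin at a time) rather than via monolithic $O(n_m)$-cost rebuilds. Each such incremental adjustment costs $O(B)$ work and is triggered only once per $\Theta(B)$ operations, giving $O(1)$ amortized per-operation rebuild work. The sparse/dense-RAM buffering idea from Section~\ref{sec:fixed_size} would then convert the amortized guarantee into a high-probability worst-case one, by buffering pending rebuild work on a small side structure that drains into the main instance at a steady rate and, with high probability, never grows large.

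The main obstacle is that the auxiliary region is itself an open-addressed, load-factor-$1$, dynamically-resized hash table of size $O(n_m^{\delta})$ --- exactly the problem we are trying to solve. My plan is to handle it recursively, applying the same construction one scale down. Although a naive recursion has depth $O(\log \log n)$, I would aim to collapse the per-operation cost to $O(1)$ either by (i) terminating the recursion once the sub-table is small enough to fit in $O(1)$ machine words, at which point bit-manipulation tricks support $O(1)$-time operations on the entire sub-table directly; or (ii) unifying the deeper recursive layers into a single global ``mega-backyard'' whose maintenance is folded into the outermost rebuild schedule, so that only the top layer contributes per-operation work. In either case, the central bookkeeping task is to synchronize the rebuild schedules across levels so that the memory prefix has length exactly $N$ at all times.

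Finally, for the polynomial-universe reduction from Section~\ref{sec:prelim} to remain valid across resizings (as anticipated in Remark~\ref{rem:universe}), I would sample the universe-compressing hash function $g$ with range polynomial in the largest $N$ foreseen over a long window, and refresh $g$ only when $N$ crosses a large multiplicative threshold; since such refreshes are rare, their $O(N)$ cost is absorbed by the same amortization argument as the main-instance rebuilds.
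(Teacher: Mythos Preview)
Your outline has the right shape---a large fixed-capacity core plus a small dynamically-resized margin---but both mechanisms you propose for closing the construction have real gaps.

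First, ``adjust $n_m$ in small increments of $B$ slots (one bin at a time)'' does not work as stated. The fixed-size construction of Section~\ref{sec:fixed_size} hashes via $h:U\to[m]$, so changing the bin count from $m$ to $m\pm 1$ reassigns essentially every key; moreover, the entire RAM encoding (index bins vs.\ partner bins, the shifts $r_i$, the coupling pairs) is tied to a fixed $m$. There is no $O(B)$-work way to add or remove a bin. The paper never changes any instance's bin count. Instead it lays out the last segment $Z$ with its $M$ bins \emph{interleaved} (slot $i$ goes to bin $i\bmod M$), so that appending or removing one slot at the array's end changes exactly one bin's size by one; and in place of a single main, it keeps a geometric sequence $Y_{\log|X|},\ldots,Y_{\lfloor\log n\rfloor-1}$ of $O(\log\delta^{-1})=O(1)$ fixed-size instances of doubling capacities, so that when $n$ crosses a power of two, $Z$ simply graduates into a new $Y_i$ (or the top $Y_i$ becomes the new $Z$) with no rehashing at all.

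Second, your recursive auxiliary does not give $O(1)$ queries. A query has to check every level to locate the key, and with $\Theta(\log\log n)$ levels that is $\Theta(\log\log n)$ time; option~(i) addresses only the base case, not the intermediate levels, and option~(ii) is too vague to evaluate. The paper avoids recursion entirely: the bottom segment $X$ (of size $\Theta(\delta n)$) is left unstructured, with every key's position recorded by a retrieval structure of $\delta\cdot O(n\log n)$ bits stored in the core RAM $Y_{\overline{i}}$. This fits precisely because $|X|=\Theta(\delta n)$ rather than $\Theta(n)$---which in turn is possible only because the $O(1)$ intermediate $Y_i$'s absorb the bulk of the elements without themselves needing per-key retrieval entries. (A smaller issue: deletions of keys that live in the main cannot be ``processed entirely within the auxiliary''; the paper handles this by swapping the deleted key with a random element of $Z$.)
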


Note that, throughout this section, $n$ will be a dynamically-changing quantity, as it represents the number of keys present at any given moment. In contrast, when referring to hash table with fixed capacity, we will use $N$ to denote its capacity.

As a convention, we will refer to the data structure from Section \ref{sec:fixed_size} as \emph{fixed-size partner hashing}. Recall that, if fixed-size partner hashing is parameterized to have capacity $N$, then as part of the construction, it also implements a $\Theta(N \log N)$-bit RAM. As in Section \ref{sec:fixed_size}, it will be important that the leading constant in the $\Theta(N \log N)$ is independent of the parameter $\delta \in (0, 0.1)$. As noted in \cref{sec:prelim}, we will sometimes abuse asymptotic notation by using $\delta \cdot O(f(n))$ to denote $\delta \cdot g(n)$ for some $g(n) = O(f(n))$ where the hidden constant is oblivious to $\delta$.

In our applications of fixed-size partner hashing in this section, we will often find ourselves storing different information in the RAM than we did in the previous section. As a slight abuse of notation, we will still refer to such a data structure as an instance of fixed-size partner hashing (just storing different content in its RAM). An important special case will be the one where the RAM is empty (i.e., the logical layout and the physical layout of the data structure are the same).

Finally, before we can describe our approach to resizing, it will be helpful to choose a specific array layout for fixed-size partner hashing. Consider an instance of fixed-size partner hashing with capacity $N$, and that is implemented using a backyard of some size $T = O(N / \log^{10} N)$ and using $M = \Theta((N - T) / B)$ bins of size $B = \Theta(N^{\delta})$. Throughout this section, we shall assume that the data structure is laid out as follows: the first $T$ slots of the array are used to store the backyard, and the remaining $N - T$ slots are used to implement the bins, with each slot $i$ allocated to bin $(i \bmod {M})$. 

The advantage of this array layout is that, if we consider any \emph{prefix} of the array, we can think of the prefix as also consisting of a backyard of size (up to) $T$, and then $M$ bins that are equal-size (up to $\pm 1$). If every element in the frontyard is physically in the bin to which it logically belongs, then we say that the prefix is an \emph{valid prefix of an empty $n$-size RAM}.

\paragraph{High-level approach to resizing.}
Let $n$ denote the \emph{current} number of elements. At any given moment, we will maintain the structural invariant that, with high probability in $n$, the array is broken into segments as follows:
\begin{itemize}
\item The first segment $X$ is a prefix of the array with a power-of-two size satisfying $|X| \in \delta n \cdot [0.1, 0.9]$. There are no restrictions on the order of elements within $X$.
\item For all $i \in [\,\log |X|,\, \lfloor \log n \rfloor)$, the subarray consisting of slots $Y_i := [2^i, 2^{i + 1})$ is a valid $2^i$-size RAM. Furthermore, at any given moment, there is some RAM $Y_{\overline{i}}$, with $\overline{i} \in \{\lfloor \log n \rfloor - 5, \lfloor \log n \rfloor - 4, \lfloor \log n \rfloor - 3\}$, that is considered the \emph{core RAM}, and that is responsible for storing the metadata used by the entire resizable hash table. 
\item The final $n - 2^{\lfloor \log n \rfloor}$ slots form a valid prefix of an empty $2^{\lfloor \log n \rfloor}$-size RAM. We refer to this segment of the array as $Z$.
\end{itemize}

To complete the picture, let us also describe the metadata associated with each of $X, \{Y_i\}, Z$ (as we noted above, all of this metadata will be stored in the core RAM $Y_{\overline{i}}$). For the $O(1) \cdot \delta n$ elements in $X$, we store a retrieval data structure that maps each such element to its position. For each $Y_i$ and for $Z$, we store all of the metadata that would normally be associated with a fixed-size partner hash table. The retrieval data structure for $X$ consumes $\delta \cdot O(n \log n)$ bits, and the metadata for the $Y_i$ and for $Z$ consume a total of $\delta \cdot O(n \log n)$ bits as well (by the analysis in \cref{sec:fixed_size}). 

In addition to this metadata, we store a few other basic pieces of information, namely, the values of $\overline{i}$ and of $n$. We will also, later on in the section, store additional metadata (increasing the overall amount by an $O(1)$-factor independent of $\delta$) in order to support the gradual rebuild work that is necessary in order to maintain the structural invariant described above.

\paragraph{How to think about $Z$.}
Before we continue, it is worth commenting on how to think about $Z$. At any given moment, $Z$ will be a \emph{prefix} of a valid empty $2^{\lfloor \log n \rfloor}$-sized RAM. As insertions and deletions occur, $Z$'s size will change. When $Z$ is large enough that it contains slots allocated to bins (rather than just backyard slots), this means that the sizes of $Z$'s bins also change over time. When the size of one of $Z$'s bins increases (because a slot is added to the end of $Z$), we will pull an element from the backyard into that bin to occupy that slot. Likewise, when the size of one of $Z$'s bins decreases (because a slot is removed from the end of $Z$), we will evict an element from that bin into the backyard. In both cases, of course, will also update the appropriate metadata for $Z$ (i.e., the overflow linked list for the affected bin, and the retrieval data structures used to locate elements in both the backyard and frontyard).

It is worth verifying that, at any given moment, (with high probability in $n$) for each bin $j$ in $Z$, we will be able to fill bin $j$ with elements that hash to bin $j$. The proof is just a minor extension of \cref{claim:no_underflow}.

\begin{claim}
  \label{claim:resizeunderflow}
Let $T = n / \log^{10} n$ be the capacity of $Z$'s backyard, and suppose $|Z| \ge T$. Let $B$ be the current bin size that all of the bins in $Z$ have (up to $\pm 1$). We have with high probability in $n$ that each bin $j$ in $Z$ has at least $B + 1$ keys that hash to it.
\end{claim}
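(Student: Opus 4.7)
The plan is to generalize the argument of \cref{claim:no_underflow} to the present setting, where $Z$ is only a prefix of the full $2^{\lfloor \log n \rfloor}$-size RAM layout. Let $M = \Theta(n^{1-\delta})$ denote the number of bins in that full layout, and $B^\star = \Theta(n^\delta)$ its bin size. Because $Z$'s array layout assigns slot $T+i$ to bin $i \bmod M$, and because $T \gg M$ for large $n$, every bin of $Z$ receives either $\floor*{(|Z|-T)/M}$ or $\ceil*{(|Z|-T)/M}$ slots; these are the two possible values of the ``current bin size,'' both within $\pm 1$ of the $B$ in the claim.

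Since $Z$ operates at load factor $1$, it currently holds exactly $|Z|$ keys, each hashed by a $c$-wise independent $h_Z : U \to [M]$. For a fixed bin $j$, let $X_j$ be the number of keys hashing to it. Then
\[
\E[X_j] \;=\; \frac{|Z|}{M} \;=\; \frac{|Z|-T}{M} + \frac{T}{M},
\]
so the expected occupancy exceeds the current bin size by exactly $T/M = \Theta\bk*{n^\delta / \log^{10} n}$, which is $\omega(1)$. The core task is now to show that, with high probability in $n$, every $X_j$ lies within $T/(2M)$ of its expectation; once this is in hand, each $X_j \ge B + T/(2M) \ge B + 1$, as required.

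To obtain the concentration, I would apply the standard $c$-wise independence tail bound (Theorem~4 of \cite{schmidt1995chernoffhoeffding}) with deviation $t := T/(2M)$ and variance estimate $\Var(X_j) \le \E[X_j] = O(B^\star) = O(n^\delta)$, yielding
\[
\Pr\!\sqb*{\abs{X_j - \E[X_j]} \ge t} \;\le\; O\!\bk*{\frac{\E[X_j]}{t^2}}^{c/2} \;\le\; O\!\bk*{\frac{\log^{20} n}{n^\delta}}^{c/2}.
\]
Choosing $c$ as a sufficiently large constant multiple of $1/\delta$ makes this $n^{-\omega(1)}$ per bin, and a union bound over all $M \le n$ bins still leaves a $1/\poly(n)$ failure probability.

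The step I expect to be the crux is producing the gap $T/M = \omega(1)$: without this slack, the modest concentration afforded by $c$-wise independence has nothing to exploit. This calculation relies on the bin size having been boosted to $n^{\Omega(1)}$ back in \cref{sec:fixed_size} (which forces $M = n^{1 - \Omega(1)}$ and hence $T/M = \Omega(n^\delta / \log^{10} n)$, polynomial in $n$ rather than a mere $\poly\log n$). Once the gap is established, the remainder of the argument is a direct reuse of the analysis in \cref{claim:no_underflow}.
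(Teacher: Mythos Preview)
Your proposal is correct and follows essentially the same route as the paper: compute $\E[X_j] = |Z|/M = B + T/M$ with slack $T/M = \Theta(n^{\delta}/\log^{10} n)$, apply the $c$-wise-independence tail bound of \cite{schmidt1995chernoffhoeffding} at deviation comparable to that slack, and union-bound over the $M \le n$ bins. One small wording slip: with $c$ a fixed constant the per-bin failure probability is $n^{-\Theta(\delta c)}$, i.e.\ $1/\poly(n)$ for $c$ large enough, not $n^{-\omega(1)}$; your final conclusion is nonetheless correct.
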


\begin{proof}
  Throughout the proof, we consider only the $n' \defeq |Z| = n - 2^{\lfloor \log n \rfloor}$ keys in $Z$. We first bound the probability that bin 0 ``underflows'' ($B$ or fewer keys hash to it). Let $J_i = \ind\Bk{\text{the $i$-th key $\hashto$ bin 0}}$ and $J \defeq \sum_{i=1}^{n'} J_i$. Letting $T$ denote the size of the backyard and $M$ denote the number of bins, we know $\E[J] = B  + T / M = B + \Theta(n^{\delta} / \log^{10} n) \ge B + n^{\delta - o(1)}$ and $\Var(J) \le \E[J]$. By concentration bounds for $c$-wise independent variables (see, e.g., Theorem 4 in \cite{schmidt1995chernoffhoeffding}), we have
  \begin{align*}
    \Pr\Bk*{\abs[\big]{J - \E[J]} \ge n^{\delta - o(1)}}
    &\le O\bk*{\frac{\E[J]}{\bk[\big]{n^{\delta - o(1)}}^{2}}}^{c/2}
    \le \bk*{\frac{n^{\delta(1 + o(1))}}{n^{2\delta(1 - o(1))}}}^{c/2}
    < n^{-\delta(c/2)(1 \pm o(1))} \ll n^{-\delta c/3}.
  \end{align*}
  So, bin 0 underflows with probability $n^{-\delta c/3}$. Taking a union bound over all $M \le n$ bins, we know that with probability $1 - n^{-\delta c/3 + 1}$, no bin will underflow. Since $c$ is a sufficiently large constant (with respect to  $\delta^{-1}$), this is a high-probability bound in $n$.
\end{proof}

\begin{remark}[High-Probability Guarantees When $|Z|$ is Small.]
A small amount of special care is needed for the case where $|Z| = m^{o(1)}$. In this case, the retrieval data structure used to access the elements in $Z$ would have size $m^{o(1)}$, and would naively fail to offer high-probability-in-$m$ guarantees (since, a priori, the retrieval data structure offers high-probability guarantees as a function of its own size). However, by adding $\sqrt{m}$ dummy elements to the retrieval data structure, we can avoid this issue and always maintain high-probability-in-$m$ guarantees.
\end{remark}

\paragraph{Gradually rebuilding in order to maintain the high-level structure.}
At a high level, as insertions and deletions are performed, the size of $Z$ will change. Whenever $\lfloor \log n \rfloor$ decreases, the rightmost $Y_i$ becomes the new $Z$. Likewise, whenever $\lfloor \log n \rfloor$ increases, $Z$ becomes a $Y_i$, and a new (initially empty) $Z$ is created. 

During each insertion/deletion, we will spend $O(1)$ time performing what we call \emph{migration work}, which will allow us to maintain the invariants that (1) $X$ has a power-of-two size in the range $\delta n \cdot (0.1, 0.9)$, (2) $Z$ is a valid prefix of an empty $2^{\lfloor \log n \rfloor}$-size RAM, and (3) there is some core-RAM $Y_{\overline{i}}$, satisfying $\overline{i} \in \{\lfloor \log n \rfloor - 5, \lfloor \log n \rfloor - 4, \lfloor \log n \rfloor - 3\}$, storing all of the metadata for the data structure. The migration work is spent gradually changing the state of the data structures that the invariants are true at all times. 

In more detail, the migration work is spent on a few different tasks. 
\begin{itemize}
    \item During each insertion/deletion, we spend $O(1)$ migration work copying metadata from $Y_{\overline{i}}$ to each of $Y_{\overline{i} + 1}$ and $Y_{\overline{i} - 1}$. This allows us to re-decide the value of $\overline{i}$ every time $m$ changes by, say, $1\%$, so that we can maintain the invariant that $\overline{i} \in \{\lfloor \log n \rfloor - 5, \lfloor \log n \rfloor - 4, \lfloor \log n \rfloor - 3\}$ at all times even as $\lfloor \log n\rceil$ changes over time.
    \item During each deletion, we spend $O(1)$ work clearing out the RAMs of each of $Y_{\lfloor \log n \rfloor - 2}, Y_{\lfloor \log n \rfloor - 1}$ (if they have any content). This ensures that, whenever a deletion causes the rightmost $Y_i$ to become $Z$, it is already an empty RAM. 
    \item On each insertion/deletion, we spend up to $O(1)$ work either building a new left-most $Y_i$ (i.e., rearranging a suffix of $X$ to become a new left-most $Y_i$, and creating the necessary metadata for the new $Y_i$) or eliminating the current left-most $Y_i$ (i.e., updating our metadata so that the left-most $Y_i$ can become part of $X$). Using hysteresis, this allows us to maintain the property that, at any given moment, $|X| \in \delta n \cdot (0.1, 0.9)$.
    \item On each insertion/deletion, we spend $O(1)$ time preparing the metadata so that, whenever a new empty $Z$ is created, we already have the metadata necessary (i.e., empty retrieval data structures and empty overflow linked lists) for it.
\end{itemize}

It is worth noting that the above migration work can straightforwardly be implemented while increasing the overall amount of metadata stored by at most an $O(1)$ factor. Thus, even with the migration work, the total amount of metadata stored in the core RAM is $\delta \cdot O(n \log n)$ bits.

\begin{remark}[Reducing to a Polynomial-Size Universe in Each Array Segment]
    Whereas fixed-capacity hash tables can assume a polynomial-size universe without loss of generality (by simply hashing down to a polynomial-size intermediate universe $U'$), dynamically-resized hash tables cannot. Since the $Y_i$'s and $Z$ are fixed-capacity data structures (technically $Z$ is a prefix of one), we can perform universe reduction within them exactly as we would for a fixed-capacity hash table. The first array segment $X$ is not a fixed-capacity data structure, but we can update the size of the intermediate universe $U'$ that we use for $X$ each time that we copy the metadata for $X$ from one $Y_{\overline{i}}$ to another $Y_{\overline{i}'}$. Thus, we can maintain the invariant that, at any given moment, each array segment treats the keys as being from a polynomial-size universe.
    \label{rem:universe}
\end{remark}

\paragraph{Supporting queries.}
To query an element $x$, we query all of $X$ (using the $X$'s retrieval data structure), $\{Y_i\}$, and $Z$. Each of these queries takes $O(1)$ time making for $O(\log \delta^{-1}) = O(1)$ total query time.

\paragraph{Supporting insertions.}
Now suppose we wish to insert an element $x$. The basic idea is that $x$ gets inserted into $Z$. If, after the insertion, $Z$ is small enough that it consists only of a backyard (and no bin slots), then the insertion is implemented by simply placing the element at the end of the array, and updating the metadata for $Z$ (which is stored not in $Z$ but in $Y_{\overline{i}}$) appropriately. Otherwise, the insertion is implemented as follows. First, we add a slot to the end of the array which, in turn, increments the size of some bin $j$. Using the overflow linked list for that bin $j$, we find an element in the backyard that can be moved into the bin (such an element exists with high probability by \cref{claim:resizeunderflow}). This creates a free slot in the backyard which is used for the newly inserted element $x$. Finally, all of the metadata for $Z$ is updated appropriately. 

Additionally, whenever an insertion is performed, we also spend $O(1)$ time on migration work, as described above. If, after the insertion, $Z$ is full (size $2^{\lfloor \log n \rfloor}$), then the insertion causes $Z$ to become the new right-most $Y_i$, and a new (empty) $Z$ is created.

\paragraph{Supporting deletions.}
Next, suppose we wish to delete an element $x$. To begin, suppose $x \in Z$. Then we first use the procedure for deleting from a fixed-size partner hash table in order to delete $x$ (this creates a free slot $k$ in $Z$'s backyard). We then take the element in slot $n$ (where $n$ is the size before the deletion) and move it to slot $k$, updating the metadata for the hash table appropriately. Notably, if, before the deletion, slot $n$ corresponded to a slot in bin $j$, then we remove that slot from the bin, and update the overflow linked list for bin $j$ to reflect the fact that an element has been evicted from the bin; we also update the retrieval data structures for both the frontyard and backyard of $Z$.

Now, moving to the case where $x$ is not in $Z$, because it is in either $X$ or some $Y_i$, we can reduce to the case of $x \in Z$ as follows. Let $Q$ be the segment of the array containing $x$ ($Q$ is either $X$ or some $Y_i$). We delete $x$ by selecting a random $z \in Z$, deleting $z$ from $Z$ (using the procedure above), deleting $x$ from $Q$ (using the procedure for deleting from a fixed-size partner hash table), and inserting $z$ into $Q$ (again using the procedure for a fixed-size partner hash table).

Finally, whenever a deletion is performed, we also spend $O(1)$ time on migration work. If, prior to the deletion, $Z$ was already empty, then the deletion causes the right-most $Y_i$ to become the new $Z$.

\paragraph{Putting the pieces together.}
By design, the total amount of metadata, at any given moment, is $\delta \cdot O(n \log n)$. On the other hand, the RAMs implemented by $Y_{\overline{i} - 1}, Y_{\overline{i}}, Y_{\overline{i} + 1}$ all have sizes $\Theta(n \log n)$ bits with a hidden constant independent of $\delta$. It follows that, so long as $\delta$ is taken to be a sufficiently small positive constant, we will never attempt to store more metadata in a RAM than can be fit.

The high-level structure of $X$, $\{Y_i\}$ and $Z$ is maintained by the $O(1)$ migration work that is performed on each insertion and deletion. Besides this work, each insertion and deletion spends $O(1)$ updating the data structure. The only possible failure modes are that (1) some retrieval data structure fails, which occurs with probability at most $1/\poly(n)$; (2) some bin in some $Y_i$ underflows, which by the analysis in Section \ref{sec:fixed_size} occurs with probability $1 / \poly(|Y_i|) = 1 / \poly(n)$; (3) some bin in $Z$ underflows, which by \cref{claim:resizeunderflow} occurs with probability $1 / \poly(n)$; or (4) some RAM write in some $Y_i$ fails, which by the analysis in Section \ref{sec:fixed_size}, occurs with probability $1 / \poly(|Y_i|) = 1 / \poly(n)$. Such failures can easily be detected, at which point the data structure can be rebuilt from scratch using new hash functions.

Barring $1 / \poly(n)$-probability failure events, each insertion/deletion is $O(1)$ time. Finally, queries take $O(\log \delta^{-1}) = O(1)$ worst-case time, since they spend $O(1)$ time on each segment of the array.

\bibliographystyle{alpha}
\bibliography{ref,hashing_massive}

\end{document}